\documentclass[11pt]{article}
\usepackage[letterpaper,margin=1in]{geometry}

% For fonts
% \usepackage{times}
\usepackage{amssymb,amsmath,color}
\usepackage{dsfont}
\usepackage{url}
\usepackage{pifont}

\usepackage{stmaryrd}

% For figures
\usepackage{graphicx}
\usepackage{subfigure}

% For algorithms
\usepackage{algorithm}
\usepackage{algorithmic}
% \usepackage{algpseudocode}

% For citations
%\usepackage{natbib}
\usepackage{hyperref}

% For theorems
\usepackage{amsthm}
\theoremstyle{plain}
\newtheorem{lemma}{Lemma}
\newtheorem{theorem}{Theorem}

\newtheorem{corollary}{Corollary}
\theoremstyle{definition}

% \renewenvironment{proof}{\vspace{1em}\noindent\textbf{Proof:}}{\hfill\qedsymbol{}\vspace{1em}}

% For references
\newcommand{\Sec}[1]{Section~\ref{sec:#1}}
\newcommand{\Eq}[1]{Eq.~(\ref{eq:#1})}
\newcommand{\Alg}[1]{Alg.~(\ref{alg:#1})}

\newcommand{\Theorem}[1]{Theorem~\ref{th:#1}}
\newcommand{\Corollary}[1]{Corollary~\ref{cor:#1}}
\newcommand{\Lemma}[1]{Lemma~\ref{lem:#1}}

% For sections, equations, lists, etc...
\newcommand{\BEAS}{\begin{eqnarray*}}
\newcommand{\EEAS}{\end{eqnarray*}}
\newcommand{\BEA}{\begin{eqnarray}}
\newcommand{\EEA}{\end{eqnarray}}
\newcommand{\BEQ}{\begin{equation}}
\newcommand{\EEQ}{\end{equation}}
\newcommand{\BIT}{\begin{itemize}}
\newcommand{\EIT}{\end{itemize}}
\newcommand{\BNUM}{\begin{enumerate}}
\newcommand{\ENUM}{\end{enumerate}}
\newcommand{\BA}{\begin{array}}
\newcommand{\EA}{\end{array}}

% For maths

\newcommand{\one}{\mathds{1}}

\newcommand{\argmin}{\mathop{\rm argmin}}

% For math sets and notations

% \def \L{  { \Lambda} }

\def \cA{{\mathcal A}}

\def \cT{{\mathcal T}}
\def \cQ{{\mathcal Q}}
\def \cF{{\mathcal F}}

\def \E{{\mathbb E}}
\def \P{{\mathbb P}}
\def \R{{\mathbb R}}
\def \N{{\mathbb N}}

\newcommand{\Exp}[1]{\E\left[#1\right]}
\newcommand{\ExpS}[1]{\E[#1]}

\newcommand{\Prob}[1]{\P\left(#1\right)}

\newcommand{\ie}{i.e.\ }
\newcommand{\eg}{e.g.\ }

% For misc...
\newcommand{\Suniv}{S^{\mbox{\footnotesize \textsc{univ}}}}
\newcommand{\SPRS}{S^{\mbox{\footnotesize \textsc{prs}}}}
\newcommand{\SSPRS}{S^{\mbox{\footnotesize \textsc{sprs}}}}
\newcommand{\LBtime}[1]{\cT\left(#1, p\right)}
\newcommand{\UBtime}[1]{\widetilde{\cT}\left(#1, p\right)}
\newcommand{\comb}[2]{\mbox{\textsc{SimParallel}}(#1, #2)}

\usepackage[utf8]{inputenc} % allow utf-8 input
\usepackage[T1]{fontenc}    % use 8-bit T1 fonts
\usepackage{hyperref}       % hyperlinks
\usepackage{url}            % simple URL typesetting
\usepackage{booktabs}       % professional-quality tables
\usepackage{amsfonts}       % blackboard math symbols
\usepackage{nicefrac}       % compact symbols for 1/2, etc.
\usepackage{microtype}      % microtypography
\usepackage{xcolor}         % colors
\usepackage{authblk}

\title{Black-box Acceleration of Las Vegas Algorithms\\and Algorithmic Reverse Jensen's Inequalities}

\author[1]{
Kevin Scaman\\
\small Inria Paris - Département d’informatique de l’ENS, PSL Research University
}
\date{}

\begin{document}

\maketitle

\begin{abstract}
Let $\cA$ be a Las Vegas algorithm, \ie an algorithm whose running time $T$ is a random variable drawn according to a certain probability distribution $p$.
In 1993, Luby, Sinclair and Zuckerman \cite{luby1993optimal} proved that a simple universal restart strategy can, for any probability distribution $p$, provide an algorithm executing $\cA$ and whose expected running time is $O(\ell^\star_p\log\ell^\star_p)$, where $\ell^\star_p = \Theta\left( \inf_{q\in (0,1]} Q_p(q)/q\right)$ is the minimum expected running time achievable with full prior knowledge of the probability distribution $p$, and $Q_p(q)$ is the $q$-quantile of $p$. Moreover, the authors showed that the logarithmic term could not be removed for universal restart strategies and was, in a certain sense, optimal.
In this work, we show that, quite surprisingly, $O(\ell^\star_p\log\ell^\star_p)$ is not optimal for universal restart strategies, and the logarithmic term can be replaced by a smaller quantity, thus reducing the expected running time in practical settings of interest. More precisely, we propose a novel restart strategy that executes $\cA$ and whose expected running time is $O\big(\inf_{q\in (0,1]}\frac{Q_p(q)}{q}\,\,\psi\big(\log Q_p(q),\,\log (1/q)\big)\big)$ where $\psi(a,b) = 1 + \min\left\{a + b, \,a\log^2 a, \,b \log^2 b\right\}$. This quantity is, up to a multiplicative factor, better than: 1) the universal restart strategy of \cite{luby1993optimal}, especially in low variance settings where $T$ is almost deterministic, 2) any $q$-quantile of $p$ for $q\in(0,1]$, 3) the original algorithm $\cA$ in $\Exp{T}$, and 4) quantities of the form $\phi^{-1}(\Exp{\phi(T)})$ for a large class of concave functions $\phi$. The latter extend the recent restart strategy of \cite{zamir2022wrong} achieving $O\left(e^{\Exp{\ln(T)}}\right)$, and can be thought of as algorithmic \emph{reverse Jensen's inequalities}, as $\cA$ can be executed (via restarts) in expected running time $O(\phi^{-1}(\Exp{\phi(T)}))$ instead of $\Exp{T}$. Finally, we show that the behavior of $\frac{t\phi''(t)}{\phi'(t)}$ at infinity controls the existence of reverse Jensen's inequalities by providing a necessary and a sufficient condition for these inequalities to hold.
\end{abstract}

\section{Introduction}

\begin{algorithm}[t]
\caption{Multi-start algorithm $\tilde{\cA}^S$}\label{alg:restart}
\begin{algorithmic}
\REQUIRE $x$, algorithm $\cA$ and cutoff sequence $S\in[1,+\infty)^\N$
% \ENSURE $\tilde{\cA}^S(x)$
\STATE $k\gets 0$
\REPEAT
\STATE {Run $\cA(x)$ for at most a time $S_k$}
\STATE $k\gets k+1$
\UNTIL {$\cA(x)$ terminates}
\end{algorithmic}
\end{algorithm}

Restart strategies for Las Vegas algorithms are used in multiple domains of computer science, from solving difficult optimization problems such as $k$-SAT or CSP \cite{doi:10.1137/120868177,10.1145/3313276.3316359,DBLP:conf/focs/Scheder21} to searching in large graphs \cite{10.5555/646307.687903}. Their efficiency was extensively analyzed, both theoretically \cite{Guibas96,luby1993optimal,zamir2022wrong} and in practice  \cite{DBLP:conf/uai/HoosS98,DBLP:conf/aaai/StreeterGS07a,PhysRevLett.116.170601,Evans_2020}.
More specifically, \cite{Guibas96} provided high-probability upper bounds on the running time of restart strategies, while \cite{luby1993optimal} focused their analysis on the expected running time, and showed that a simple universal restart strategy could achieve an expected running time in $O(\ell^\star_p\log\ell^\star_p)$, where $\ell^\star_p$ is the minimum expected running time achievable with full prior knowledge of the probability distribution of the running time $p$.
Unfortunately, the logarithmic term can sometimes lead to situations in which the original algorithm is faster than the restart strategy, for example for deterministic running times for which $\ell^\star_p = \Exp{T}$.
More recently, \cite{zamir2022wrong} showed that another restart strategy could achieve an expected running time in $O\left(e^{\Exp{\ln(T)}}\right)$, thus always improving on the original algorithm due to Jensen's inequality. This upper bound is however not comparable to $O(\ell^\star_p\log\ell^\star_p)$, and can sometimes be much larger, for example when $T=+\infty$ with non-negligible probability $p\in(0,1)$, in which case $e^{\Exp{\ln(T)}}=+\infty$ while $\ell^\star_p$ remains finite.
Nonetheless, the existence of a restart strategy that always improves upon the original algorithm is quite surprising, as \cite{luby1993optimal} showed in their work that $O(\ell^\star_p\log\ell^\star_p)$ was, in a certain sense, optimal (see \Theorem{luby}). This issue raises the question of optimality of the universal strategy of \cite{luby1993optimal}, and calls for the design of algorithms improving on both quantities.

In this work, we show that such an algorithm exists and, quite surprisingly, the logarithmic term in $O(\ell^\star_p\log\ell^\star_p)$ can be replaced by a smaller quantity, thus reducing the expected running time in practical settings of interest. More precisely, in \Sec{quantiles}, we first show that a simple restart strategy can reach any fixed quantile (up to a constant multiplicative term), and propose in \Sec{adaptive} a novel restart strategy, called \emph{symmetric parallel restart strategy} (SPRS), that executes $\cA$ and whose expected running time is $O\big(\inf_{q\in (0,1]}\frac{Q_p(q)}{q}\,\,\psi\big(\log Q_p(q),\,\log (1/q)\big)\big)$ where $\psi(a,b) = 1 + \min\left\{a + b, \,a\log^2 a, \,b \log^2 b\right\}$.
This strategy is based on the a novel framework, called \emph{simulated parallel computations}, that allows to run multiple (and even countably infinite) restart strategies at the same time by simulating parallel execution on a single computing unit (see \Lemma{combine}).
Moreover, we investigate in \Sec{jensen} the relationship between our expected running time $O\big(\inf_{q\in (0,1]}\frac{Q_p(q)}{q}\,\,\psi\big(\log Q_p(q),\,\log (1/q)\big)\big)$ and quantities of the form $O(\phi^{-1}(\Exp{\phi(T)}))$ for positive, increasing and concave functions $\phi$, and show that the behavior of $\frac{t\phi''(t)}{\phi'(t)}$ at infinity drives the existence or absence of algorithmic reverse Jensen's inequalities. More precisely, our results are twofold:
\BNUM
\item \textbf{Sufficient condition:} If $\liminf_{t\to +\infty}\frac{t\phi''(t)}{\phi'(t)} > -2$, then the expected running time of SPRS, for any distribution $p$, is upper bounded by $\LBtime{\SSPRS} = O(\phi^{-1}(\Exp{\phi(T)}))$ and the reverse Jensen's inequality holds in an algorithmic sense.
\item \textbf{Necessary condition:} If $\limsup_{t\to +\infty}\frac{t\phi''(t)}{\phi'(t)} < -2$, then no restart strategy can achieve an expected running time $\LBtime{S} = O(\phi^{-1}(\Exp{\phi(T)}))$.
\ENUM

\paragraph{Notations}
In what follows, we use $f(x) = O(g(x))$ (resp. $f(x) = \Omega(g(x))$) as a shorthand for $\exists C > 0$ a universal constant such that $\forall x, f(x) \leq C\,g(x)$ (resp. $f(x) \geq C\,g(x)$), and $f(x) = \Theta(g(x))$ as a shorthand for both $f(x) = O(g(x))$ and $f(x) = \Omega(g(x))$.
Finally, $\log(x)$ denotes the logarithm of $x$ in base $2$, while $\ln(x)$ denotes the natural logarithm of $x$.

\section{Restart strategies: setup and notations}

Let $\cA$ be a Las Vegas algorithm, and $T\in[1,+\infty]$ its random running time. We assume that $T$ is lower bounded by $1$ (\eg a single iteration, step or computation) and may take infinite values when the algorithm does not terminate. Moreover, we denote as $p$ its probability distribution over $[1,+\infty]$. Note that $T$ is usually an integer, for example when it represents the number of iterations of an optimization algorithm, but our analysis also extends to the more general setting in which $T$ is a real value. 
Our goal is to design a new algorithm $\tilde{\cA}$ whose output is always that of $\cA$, but its expected running time is smaller. More specifically, we consider restart schemes of the following form: let $S\in[1,+\infty)^\N$ be a (possibly random) sequence of \emph{cutoff times}, \ie times at which the original algorithm $\cA$ is stopped and restarted. We then construct a \emph{multi-start} algorithm $\tilde{\cA}^S$ by, for a given input $x$, sequentially running $\cA(x)$ multiple times with cutoff times $S_k$ for $k\geq 0$, until the algorithm eventually terminates before the cutoff (see \Alg{restart}). Let $(T_k)_{k\geq 0}$ be the i.i.d running times of algorithm $\cA$ at each iteration $k\geq 0$ of \Alg{restart}. We denote as $K^\star = \min\{k\geq 0~:~T_k \leq S_k\}$ the last iteration of $\tilde{\cA}^S$ (\ie the iteration in which $\cA$ terminated and provided the desired output), and $\LBtime{S} = \Exp{\sum_{k=0}^{K^\star-1} S_k + T_{K^\star}}$ the expected running time of $\tilde{\cA}^S$. Our objective is thus to find a sequence $S$ minimizing the expected running time of the multi-start strategy.
In their seminal paper, \cite{luby1993optimal} showed that this minimal expected running time is
\BEQ\label{eq:opt}
\inf_{S\in[1,+\infty)^\N} \LBtime{S} \quad=\quad \inf_{\alpha\geq 1} \frac{\Exp{\min\{T,\,\alpha\}}}{\Prob{T\leq \alpha}}\,,
\EEQ
and is achieved for a constant sequence $S = (\alpha^\star,\alpha^\star,\dots)$ where $\alpha^\star$ minimizes \Eq{opt}.
The right-hand side of \Eq{opt} is tightly connected to \emph{quantiles} of the distribution $Q_p(q) = \min\{\alpha\in[1,+\infty]~:~\Prob{T\leq \alpha}\geq q\}$ for $q\in(0,1]$, \ie the first time at which the algorithm $\cA$ has already terminated with probability at least $q$.
\begin{lemma}[adapted from \cite{luby1993optimal}]\label{lem:opt}
For any distribution $p$, we have
\BEQ\label{eq:optimal}
\frac{1}{2}\,\inf_{q\in(0,1]} \frac{Q_p(q)}{q} \quad\leq\quad \inf_{S\in[1,+\infty)^\N} \LBtime{S} \quad\leq\quad \inf_{q\in(0,1]} \frac{Q_p(q)}{q}\,.
\EEQ
\end{lemma}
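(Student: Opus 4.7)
The plan is to invoke \Eq{opt} from \cite{luby1993optimal} to reduce the inequality to a comparison between $\inf_{\alpha \geq 1} \Exp{\min\{T,\alpha\}}/\Prob{T \leq \alpha}$ and $\inf_{q \in (0,1]} Q_p(q)/q$. The upper bound falls out immediately by specializing $\alpha := Q_p(q)$ for each $q \in (0,1]$: by definition of the quantile, $\Prob{T \leq Q_p(q)} \geq q$, and $\Exp{\min\{T, Q_p(q)\}} \leq Q_p(q)$ trivially, so the ratio is bounded by $Q_p(q)/q$, and taking the infimum over $q$ concludes this direction.

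For the lower bound, write $L^\star := \inf_{q \in (0,1]} Q_p(q)/q$. The key step I would take is to convert this quantile infimum into a pointwise linear control on the CDF: since $Q_p(q) \geq L^\star q$ for every $q$, and $Q_p$ is the generalized inverse of $F(t) := \Prob{T \leq t}$, a short argument (take $q = F(t)$ and use $Q_p(F(t)) \leq t$) yields $F(t) \leq t/L^\star$ for all $t \geq 0$, equivalently $\Prob{T > t} \geq \max\{0, 1 - t/L^\star\}$. Integrating this bound via the identity $\Exp{\min\{T,\alpha\}} = \int_0^\alpha \Prob{T > t}\,dt$ gives a lower bound on the numerator of $L(\alpha) := \Exp{\min\{T,\alpha\}}/\Prob{T\leq\alpha}$, while the same CDF control gives the upper bound $\Prob{T \leq \alpha} \leq \min\{1, \alpha/L^\star\}$ on the denominator.

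A short case split according to whether $\alpha \leq L^\star$ or $\alpha > L^\star$ then shows $L(\alpha) \geq L^\star/2$ in both regimes: when $\alpha \leq L^\star$ the numerator integrates to $\alpha - \alpha^2/(2L^\star)$ while the denominator is at most $\alpha/L^\star$, so the ratio simplifies to $L^\star - \alpha/2 \geq L^\star/2$; when $\alpha > L^\star$ the integral over $[0,L^\star]$ alone gives numerator at least $L^\star/2$, and the denominator is at most $1$. Taking the infimum over $\alpha$ produces the required inequality. The main obstacle is isolating this CDF conversion step; a direct Markov argument on $\min\{T,\alpha\}$ only yields the weaker factor $1/4$, and it is precisely the triangular bound $\Prob{T > t} \geq 1 - t/L^\star$ that lets the numerator lower bound and denominator upper bound be combined so that their ratio stays at least $L^\star/2$ uniformly across both regimes.
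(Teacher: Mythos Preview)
Your proof is correct. The upper bound is essentially the paper's argument (both reduce to $\Exp{\min\{T,\alpha\}}\leq\alpha$ after identifying $\inf_{q}Q_p(q)/q$ with $L^\star_p=\inf_{\alpha}\alpha/\Prob{T\leq\alpha}$), but your lower bound takes a genuinely different route.

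The paper splits into cases according to whether $\alpha\leq 2\,\Exp{T}$ or $\alpha>2\,\Exp{T}$, and in the second regime invokes Markov's inequality to obtain $L^\star_p\leq 4\,\Exp{T}$. You instead extract the pointwise CDF control $\Prob{T\leq t}\leq t/L^\star$ directly from the definition of $L^\star$, feed it into the identity $\Exp{\min\{T,\alpha\}}=\int_0^\alpha\Prob{T>t}\,dt$, and split on $\alpha$ versus $L^\star$ itself. This buys two things: the argument never references $\Exp{T}$, so it goes through verbatim even when $\Exp{T}=+\infty$ (a regime where the paper's case split degenerates); and the triangular bound $\Prob{T>t}\geq 1-t/L^\star$ makes the ratio computation completely explicit, with the constant $1/2$ falling out of an elementary quadratic rather than from balancing a Markov estimate. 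The paper's route is terser, but yours is more self-contained and makes the sharpness of the factor $1/2$ more transparent.
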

\begin{proof}
Let $\ell^\star_p = \inf_{\alpha\geq 1} \Exp{\min\{T,\,\alpha\}} / \Prob{T\leq \alpha}$ and $L^\star_p = \inf_{\alpha\geq 1} \alpha / \Prob{T\leq \alpha}$. First, note that $L^\star_p = \inf_{q\in(0,1]} Q_p(q)/q$ by definition of the quantiles, as
\BEQ
L^\star_p \quad=\quad \inf_{\alpha\geq 1} \frac{\alpha}{\Prob{T\leq \alpha}} \quad=\quad \inf_{\alpha\geq 1} \frac{Q_p(\Prob{T\leq \alpha})}{\Prob{T\leq \alpha}} \quad=\quad \inf_{q\in\cQ} \frac{Q_p(q)}{q}\,,
\EEQ
where $\cQ = \left\{q\in(0,1]~:~\exists \alpha\geq 1, \Prob{T\leq \alpha} = q\right\}$.
However, the minimum of $Q_p(q)/q$ over $q\in(0,1]$ is necessarily attained in $\cQ$, as $\forall q\in(0,1]$, $Q_p(q) = Q_p(\Prob{T\leq Q_p(q)})$ and $\Prob{T\leq Q_p(q)} \geq q$.
Second, by definition of $\ell^\star_p$, we directly have $\Exp{\min\{T,\,\alpha\}} \leq \alpha$ and thus $\ell^\star_p \leq L^\star_p$. For the second inequality, we can lower bound $\Exp{\min\{T,\,\alpha\}}/\Prob{T\leq \alpha}$ depending on the value of $\alpha$: if $\alpha \leq 2\Exp{T}$, then $\Exp{\min\{T,\,\alpha\}} / \Prob{T\leq \alpha} \geq \alpha / 2\Prob{T\leq \alpha} \geq L^\star_p / 2$, and if $\alpha > 2\Exp{T}$, then $\Exp{\min\{T,\,\alpha\}} / \Prob{T\leq \alpha} \geq 2\Exp{T} / \Prob{T\leq \alpha} \geq 2\Exp{T} \geq L^\star_p / 2$, where the last inequality is due to Markov's inequality, as $L^\star_p\leq 2\Exp{T} / \Prob{T\leq 2\Exp{T}} \leq 4\Exp{T}$. As a consequence, we have that $L^\star_p \geq \ell^\star_p \geq L^\star_p /2$ (note that the factor $1/2$ is slightly better than $1/4$ obtained in \cite{luby1993optimal}).
\end{proof}

We thus have $\LBtime{S} = \Theta\left(\inf_{q\in(0,1]} \frac{Q_p(q)}{q}\right)$, and intuitively the optimal restart strategy can be as good as any $q$-quantile for $q\in(0,1]$, which may significantly reduce the expected running time compared to $\Exp{T}$.
Unfortunately, such a restart strategy requires to know beforehand the distribution of the running time $p$ (to find the optimal $\alpha^\star$), and is thus impossible to use in settings in which the distribution is unknown.
Fortunately, \cite{luby1993optimal} showed that a simple universal cutoff sequence achieves a near optimal expected running time without any knowledge on $p$.
\begin{theorem}[from \cite{luby1993optimal}]\label{th:luby}
There exists a sequence $\Suniv\in[1,+\infty)^\N$ such that, for any distribution $p$,
\BEQ
\LBtime{\Suniv} \leq 192\, \ell^\star_p (\log \ell^\star_p + 5)\,,
\EEQ
where $\ell^\star_p = \inf_{S\in[1,+\infty)^\N} \LBtime{S}$. Moreover, for any fixed sequence $S\in[1,+\infty)^\N$,
\BEQ
\max_{p~:~\ell^\star_p = \ell} \LBtime{S} \geq \frac{1}{8}\,\ell \log \ell\,.
\EEQ
\end{theorem}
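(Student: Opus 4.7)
The theorem has two parts—a universal schedule achieving $O(\ell^\star_p \log \ell^\star_p)$ and a matching lower bound against every fixed schedule—so I would prove them separately, following the strategy of Luby, Sinclair and Zuckerman. Throughout I would freely invoke \Lemma{opt} to pass between $\ell^\star_p$ and the quantile quantity $\inf_{q\in(0,1]} Q_p(q)/q$.

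\emph{Upper bound.} The candidate is the classical Luby sequence, defined recursively through the blocks $\sigma_1=(1)$ and $\sigma_{j+1}=(\sigma_j,\sigma_j,2^j)$, with $\Suniv$ agreeing with $\sigma_j$ on its first $2^j-1$ entries. A one-line induction gives $|\sigma_j|=2^j-1$, total sum $j\cdot 2^{j-1}$, and exactly $2^{j-1-i}$ occurrences of cutoff $2^i$ in $\sigma_j$ for each $0\leq i<j$. Pick $\alpha^\star\geq 1$ with $\alpha^\star/\Prob{T\leq\alpha^\star}\leq 2\ell^\star_p$ (which exists by \Lemma{opt}), set $q^\star=\Prob{T\leq\alpha^\star}$, and let $i^\star=\lceil\log\alpha^\star\rceil$. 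Because the first $2^j-1$ iterations of \Alg{restart} contain at least $2^{j-1-i^\star}$ independent attempts at cutoff $\geq\alpha^\star$, the probability that $\tilde{\cA}^{\Suniv}$ has not yet terminated is at most $(1-q^\star)^{2^{j-1-i^\star}}$; choosing $j^\circ=i^\star+\lceil\log(1/q^\star)\rceil+O(1)$ makes this a constant less than $1$. A telescoping estimate over the further levels $j\geq j^\circ$ then yields $\LBtime{\Suniv}=O(j^\circ\cdot 2^{j^\circ})=O((\alpha^\star/q^\star)\log(\alpha^\star/q^\star))=O(\ell^\star_p\log\ell^\star_p)$, and tracking constants carefully produces the stated factor $192$ and additive $5$.

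\emph{Lower bound.} Fix an arbitrary sequence $S\in[1,+\infty)^\N$ and a target $\ell\geq 1$, and build an adversarial distribution $p$ with $\ell^\star_p=\ell$ tailored to $S$. The natural family consists of two-point distributions $p_t$ with $\Prob{T=t}=t/\ell$ and $\Prob{T=\infty}=1-t/\ell$ for $t\in[1,\ell]$; a direct verification using \Eq{opt} gives $\ell^\star_{p_t}=\ell$. Since every pre-success iteration on $p_t$ contributes exactly $S_k$ to the running time, we obtain the closed form $\LBtime{S}=t+\sum_k S_k(1-t/\ell)^{N_k(t)}$ where $N_k(t)=\#\{k'\leq k:S_{k'}\geq t\}$. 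Evaluating this at the dyadic scales $t=2^j$ for $j=0,\dots,\lfloor\log\ell\rfloor$ and averaging over these $\Theta(\log\ell)$ scales, I would argue via a counting/exchange argument that—no matter how $S$ distributes its cutoffs among small and large values—the aggregate contribution of $\sum_k S_k(1-t/\ell)^{N_k(t)}$ across scales is $\Omega(\ell\log\ell)$; pigeonhole then extracts a scale $t^\star$ with $\LBtime{S}\geq\tfrac{1}{8}\ell\log\ell$.

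The main obstacle is the lower bound: $S$ is adversarial and the closed form for $\LBtime{S}$ is immediate, but the counting argument that turns a single budget constraint on $S$ into simultaneous failure across $\Theta(\log\ell)$ scales takes real work, and obtaining the explicit constant $\tfrac{1}{8}$ requires sharp pigeonhole accounting of how each entry $S_k$ contributes at each scale. The upper bound is comparatively routine once the combinatorial properties of the Luby blocks $\sigma_j$ are in hand; the expected-cost bound then follows from a standard geometric-series estimate on the tail of block failures.
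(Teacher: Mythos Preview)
The paper does not prove \Theorem{luby}: it is stated as a quotation of the result of Luby, Sinclair and Zuckerman, with the attribution ``[from \cite{luby1993optimal}]'' and no proof environment following the statement. There is therefore nothing in the paper to compare your proposal against.

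That said, your sketch is essentially the original argument from \cite{luby1993optimal}. For the upper bound, the recursive block structure $\sigma_{j+1}=(\sigma_j,\sigma_j,2^j)$, the count of $2^{j-1-i}$ occurrences of $2^i$ in $\sigma_j$, and the geometric tail over levels $j\geq j^\circ$ are exactly how Luby--Sinclair--Zuckerman obtain the $192\,\ell^\star_p(\log\ell^\star_p+5)$ bound. For the lower bound, the two-point family $\Prob{T=t}=t/\ell$, $\Prob{T=\infty}=1-t/\ell$ with $\ell^\star_{p_t}=\ell$, evaluated at dyadic $t$ and combined via a pigeonhole over $\Theta(\log\ell)$ scales, is again the original construction. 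One minor point: your closed form $\LBtime{S}=t+\sum_k S_k(1-t/\ell)^{N_k(t)}$ is not quite exact (iterations with $S_k\geq t$ contribute $t$ rather than $S_k$ on success, and the indexing of $N_k$ needs care), but the correct expression differs only by lower-order terms and the averaging argument goes through unchanged.
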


In other words, this universal restart strategy achieves an expected running time in $\LBtime{\Suniv} = O\left(\inf_{q\in(0,1]} \frac{Q_p(q)}{q}\left(1+\log\left( \frac{Q_p(q)}{q} \right)\right)\right)$, and any upper bound on the expected running time of a universal restart strategy depending solely on $\ell^\star_p$ will exhibit a multiplicative logarithmic term. However, this negative result does not imply that smaller upper bounds cannot be achieved with other characteristics of the distribution $p$, as will be shown in \Sec{adaptive}.

\section{A simple strategy for fixed quantiles}\label{sec:quantiles}
First, we show in this section that a simple restart sequence can reach an expected running time upper bounded by any predefined quantile. More specifically, we provide a slightly stronger result that upper bounds $\UBtime{S} = \Exp{\sum_{k=0}^{K^\star} S_k}$ instead of $\LBtime{S}$. The quantity $\UBtime{S}$ is the expected sum of all cutoff times before termination, and is thus always greater than $\LBtime{S}$. %The bound on $\UBtime{S}$ will be used in \Sec{adaptive}.

\begin{theorem}\label{th:quantiles}
Let $q\in(0,1]$ and $S^q$ be a sequence of restart times such that, $\forall k\geq 0$, $S^q_k=\left(1 - \frac{q}{2}\right)^{-k}$. Then, for any distribution $p$, we have
\BEQ
\UBtime{S^q} \quad\leq\quad 4\,\,\frac{Q_p(q)}{q}\,.
\EEQ
\end{theorem}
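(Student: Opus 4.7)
The plan is to use the standard identity $\UBtime{S^q} = \Exp{\sum_{k=0}^{K^\star} S^q_k} = \sum_{k\geq 0} S^q_k\,\Prob{K^\star \geq k}$ and split this series at the smallest index where the cutoff has grown past the target quantile $Q_p(q)$.

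Concretely, I would write $r = (1-q/2)^{-1}$ so that $S^q_k = r^k$, and let $k_0 = \min\{k \geq 0 : S^q_k \geq Q_p(q)\}$. For every $j \geq k_0$ the per-restart success probability satisfies $\Prob{T_j \leq S^q_j} \geq \Prob{T \leq Q_p(q)} \geq q$, which by independence of the $T_j$ yields $\Prob{K^\star \geq k} \leq (1-q)^{k-k_0}$ for all $k \geq k_0$; for $k \leq k_0-1$ I simply use $\Prob{K^\star \geq k} \leq 1$. Substituting gives
\[
\UBtime{S^q} \;\leq\; \sum_{k=0}^{k_0-1} r^k \;+\; S^q_{k_0}\sum_{j=0}^{\infty}\bigl(r(1-q)\bigr)^j.
\]

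The key algebraic observation is then the identity $r-1 = 1 - r(1-q) = \frac{q}{2-q}$, which makes the two geometric denominators coincide. A direct evaluation gives $\sum_{k=0}^{k_0-1} r^k \leq \frac{S^q_{k_0}}{r-1} = \frac{(2-q)\,S^q_{k_0}}{q}$ and $S^q_{k_0}\sum_{j\geq 0}(r(1-q))^j = \frac{(2-q)\,S^q_{k_0}}{q}$, so that the two contributions sum to $\frac{2(2-q)\,S^q_{k_0}}{q}$. To conclude, I would use the minimality of $k_0$ to obtain $S^q_{k_0} \leq r\,Q_p(q) = \frac{2\,Q_p(q)}{2-q}$ (including the boundary case $k_0 = 0$, where $S^q_0 = 1 \leq Q_p(q)$ trivially satisfies this bound). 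Plugging this in, the factor $2-q$ cancels and one is left with exactly $\UBtime{S^q} \leq 4\,Q_p(q)/q$.

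The computation is routine — essentially the evaluation of two geometric sums after an exchange of sum and expectation — so the only real modelling choice is the ratio $r = (1-q/2)^{-1}$: a larger ratio would blow up the ramp-up sum $\sum_{k<k_0} r^k$, a smaller ratio would blow up the tail sum $\sum_{k\geq k_0} r^k(1-q)^{k-k_0}$, and only this specific scaling produces the coincidence $r-1 = 1-r(1-q)$ that balances the two and delivers a constant independent of $q$. The boundary case $k_0 = 0$ is the only subtlety worth being explicit about, but it only requires adopting the empty-sum convention for $\sum_{k=0}^{-1} r^k$ and verifying the bound on $S^q_{k_0}$ by hand.
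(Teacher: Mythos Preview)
Your proof is correct and follows essentially the same approach as the paper's: both compute $\UBtime{S^q}=\sum_{k\ge 0} S^q_k\,\Prob{K^\star\ge k}$, split the sum at the first index where $S^q_k\ge Q_p(q)$, bound each piece by a geometric series, and exploit the identity $r-1=1-r(1-q)=q/(2-q)$ for $r=(1-q/2)^{-1}$ to conclude. Your write-up is in fact slightly more explicit than the paper's (which just says ``optimizing over $\gamma$'' without spelling out the balancing identity or the boundary case $k_0=0$).
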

\begin{proof}
By definition of $\UBtime{S}$, we have, for any fixed sequence $S\in[1,+\infty)^\N$,
\BEQ
\UBtime{S} \quad=\quad \Exp{ \sum_{k=0}^{+\infty} \one\{k \leq K^\star\} S_k} \quad=\quad \sum_{k=0}^{+\infty} \Prob{K^\star \geq k} S_k \quad=\quad \sum_{k=0}^{+\infty} \prod_{l=0}^{k-1} \Prob{T > S_l} S_k\,,
\EEQ
as the running times $T_l$ are i.i.d. random variables. Let $S_k = \gamma^k$ for $\gamma > 1$ and $K_q = \lceil\log_\gamma Q_p(q)\rceil$. Then, $\forall k \geq K_q$, $\Prob{T \leq S_k} \geq q$, and thus
\BEQ
\UBtime{S} \quad\leq\quad \sum_{k=0}^{K_q - 1} \gamma^k + \sum_{k=0}^{+\infty} (1-q)^k \gamma^{k+K_q} \quad\leq\quad \frac{\gamma^{K_q}}{\gamma - 1} + \frac{\gamma^{K_q}}{1 - (1-q)\gamma}\,.
\EEQ
Finally, noting that $\gamma^{K_q} \leq \gamma Q_p(q)$ and optimizing over $\gamma$ gives $\gamma = 1-\frac{q}{2}$ and the desired result.
\end{proof}

Note that the multiplicative factor of $4$ in \Theorem{quantiles} is very small compared to $192$ obtained in \cite{luby1993optimal}, which makes the upper bound more valuable in practice. For example, the restart strategy with $S^{1/2}$ returns the output of the algorithm in at most $8$ times the median of the running time. Moreover, when $q < 1$, these restart strategies are always better than the original algorithm, as applying Markov's inequality to $T$ immediately gives $(1-q)Q_p(q)\leq \Exp{T}$, and thus
\BEQ
\LBtime{S^q} \quad\leq\quad \frac{4}{q(1-q)}\,\Exp{T}\,.
\EEQ

More generally, quantiles are also better than any $\Exp{T^{a}}^{\frac{1}{a}}$ where $a>0$, as $(1-q)Q_p(q)^a\leq \Exp{T^a}$ and thus
\BEQ\label{eq:moments}
\LBtime{S^q} \quad\leq\quad \frac{4}{q(1-q)^{\frac{1}{a}}}\,\Exp{T^{a}}^{\frac{1}{a}}\,.
\EEQ

Note that, by Jensen's inequality, we have $\Exp{T^{a}}^{\frac{1}{a}} \leq \Exp{T}$, and thus \Eq{moments} can be thought of, up to a multiplicative factor, as a \emph{reverse Jensen's inequality}, as $\tilde{\cA}^{S^q}$ executes $\cA$ with a running time in $O\left(\Exp{T^{a}}^{\frac{1}{a}}\right)$ without any assumption on the running time of $\cA$ or its distribution $p$.
%Unfortunately, using $S^q$, such a reverse Jensen's inequality only works for powers. 
In \Sec{jensen}, we will see that an adaptive version of these restart strategies can provide reverse Jensen's inequalities for a large class of concave functions.

\section{Adaptive strategies through simulated parallel computations}\label{sec:adaptive}
If an optimal quantile $q^\star \in \argmin_{q\in(0,1]} \frac{Q_p(q)}{q}$ is known beforehand, then $S^{q^\star}$ is already optimal (up to a multiplicative factor). However, in general this is not the case, and we thus need to discover $q^\star$ while the algorithm is running. Our approach to such an adaptive algorithm is to run multiple restart sequences $S^{q_i}$ in \emph{parallel}, or rather simulate parallel computations by randomly updating one of the sequences at each iteration, and waiting for one of the resulting algorithms to terminate.
\begin{algorithm}[t]
\caption{$\comb{S}{r}$}\label{alg:combine}
\begin{algorithmic}
\REQUIRE cutoff sequences $(S^0,S^1,\dots) \in {[1,+\infty)^\N}^\N$ and $(r_0,r_1,\dots) \in [0,1]^\N$ such that $\sum_i r_i = 1$
\ENSURE randomly combined cutoff sequence $\tilde{S}$
\STATE For $i\in\N$, $K_i \gets 0$
\REPEAT
\STATE Sample $I\in\N$ from the probability density $\tilde{r}_i \propto r_i / S^i_{K_i}$
\STATE $\tilde{S}_k \gets S^I_{K_I}$
\STATE $K_I \gets K_I + 1$
\UNTIL
\end{algorithmic}
\end{algorithm}
To do so, we create a random sequence $\comb{S}{r}$ from base sequences $S = (S^0,S^1,\dots)$ and a probability density $r = (r_0,r_1,\dots)$ over $\N$ using the procedure described in \Alg{combine}.
Intuitively, the random sequence $\comb{S}{r}$ runs all sequences $S^i$ in \emph{parallel} by selecting, at each iteration, one of the sequences at random and incrementing its current position by one.
The random selection is chosen such that the computation speed of each algorithm $\tilde{\cA}^{S^i}$ is proportional to $r_i$.

\begin{lemma}\label{lem:combine}
Let $(S^i)_{i\in\N} \in ([1,+\infty)^\N)^\N$ be a sequence of restart sequences and $(r_i)_{i\in\N} \in [0,1]^\N$ a probability density over $\N$. Then, for any distribution $p$, we have
\BEQ
\UBtime{\comb{S}{r}} \quad\leq\quad \inf_{i\in\N} \frac{\UBtime{S^i}}{r_i}\,.
\EEQ
\end{lemma}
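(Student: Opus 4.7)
My plan is to exploit the key design feature of \Alg{combine}: the sampling rule $\tilde r_i \propto r_i/S^i_{K_i}$ is calibrated so that, conditional on the current state, the expected cost $\tilde S_k$ of a single step of $\comb{S}{r}$ and the expected ``$i$-cost'' $\one\{I_k=i\}\,S^i_{K_i}$ (the time spent on sub-strategy $i$ at step $k$) are in a fixed ratio $1:r_i$, independent of the state. Concretely, with $Z_k=\sum_j r_j/S^j_{K_j(k)}$, one finds $\E[\tilde S_k\mid\cF_k]=1/Z_k$ and $\E[\one\{I_k=i\}\,S^i_{K_i(k)}\mid\cF_k]=r_i/Z_k$. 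Consequently the process $r_i\sum_{l<k}\tilde S_l-\sum_{l<k}\one\{I_l=i\}\,S^i_{K_i(l)}$ is a martingale with respect to the natural filtration.

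Next, I would observe that for each fixed $i$, the sub-sequence of steps $l$ with $I_l=i$ is statistically identical to an independent run of $\tilde{\cA}^{S^i}$: it uses cutoffs $S^i_0,S^i_1,\ldots$ in order, and the running times $T_l$ sampled on those steps are i.i.d.\ draws from $p$, independent of the selection process $(I_l)$. This lets me introduce $K^\star_i$, the index of the first successful attempt of sub-algorithm $i$, and define $\sigma_i$ as the step of $\comb{S}{r}$ at which $i$ is selected for the $(K^\star_i+1)$-th time. At that step the sampled running time is at most the cutoff, so the combined algorithm necessarily terminates; hence $K^\star\leq\sigma_i$ and $\UBtime{\comb{S}{r}}\leq\E\bigl[\sum_{l=0}^{\sigma_i}\tilde S_l\bigr]$.

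Combining the two ingredients, a Fubini-style computation (or equivalently optional stopping applied to the martingale above) yields
\BEQ
\E\Bigl[\sum_{l=0}^{\sigma_i}\tilde S_l\Bigr]\;=\;\frac{1}{r_i}\,\E\Bigl[\sum_{l=0}^{\sigma_i}\one\{I_l=i\}\,S^i_{K_i(l)}\Bigr],
\EEQ
and the right-hand expectation equals $\UBtime{S^i}$, since by the definition of $\sigma_i$ the total $i$-cost accumulated by that step is exactly $\sum_{j=0}^{K^\star_i}S^i_j$. Taking the infimum over $i$ gives the claim.

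The main technical subtlety is justifying the displayed identity when $\sigma_i$ may be heavy-tailed (or even infinite on a null-set worth of paths) and the increments $\tilde S_l$ are unbounded and state-dependent, so the most general optional-stopping theorem does not directly apply. I would handle this by a standard truncation argument: apply the martingale identity at the bounded stopping time $(\sigma_i+1)\wedge N$, for which optional stopping is immediate, and then send $N\to\infty$ using monotone convergence on the non-decreasing cumulative cost processes $\sum_{l<k}\tilde S_l$ and $\sum_{l<k}\one\{I_l=i\}S^i_{K_i(l)}$. The case $\UBtime{S^i}=\infty$ makes the bound trivial and need not be treated separately.
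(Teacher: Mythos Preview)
Your argument is correct and follows essentially the same route as the paper: both identify the martingale $r_i\sum_{l<k}\tilde S_l-\sum_{l<k}\one\{I_l=i\}S^i_{K_i(l)}$ (the paper writes it as $X_{i,t}=R_{i,t}-r_i\widetilde R_t$), apply optional stopping at a truncated stopping time, and pass to the limit via monotone convergence. The only cosmetic difference is that the paper stops at $T^\star$ (the first time \emph{any} sub-algorithm terminates) and then uses $K_{i,T^\star}\leq K^\star_i$ to bound the $i$-cost, whereas you stop at $\sigma_i$ (the first time sub-algorithm $i$ terminates) so that the $i$-cost equals $\sum_{j\leq K^\star_i}S^i_j$ exactly and the inequality enters via $K^\star\leq\sigma_i$; both yield the same bound.
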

\begin{proof}
For any $i,t\in\N$, let $I_t\in\N$ be the index of the sequence selected at iteration $t$ of \Alg{combine}, $K_{i,t} = \sum_{j=0}^t\one\{I_j = i\}$ the number of iterations in which sequence $S^i$ was selected up to iteration $t$ (included), and $R_{i,t} = \sum_{j=0}^{K_{i,t}-1} S^i_j$ the running time of algorithm $\tilde{\cA}^{S^i}$ up to iteration $t$ (included). Moreover, let $\widetilde{R}_t = \sum_{j=0}^t S^{I_j}_{K_{I_j,j-1}}$ be the running time of the whole algorithm up to iteration $t$ (included). First, note that $\UBtime{\comb{S}{r}} = \ExpS{\widetilde{R}_{T^\star}}$, where $T^\star = \min\{t\in\N~:~\exists i\in\N, K_{i,t} \geq K^\star_i\}$ is the first iteration at which one of the parallel algorithms $\tilde{\cA}^{S^i}$ terminates, and $K^\star_i$ is the last iteration of $\tilde{\cA}^{S^i}$. In what follows we will show that, $\forall i\in\N$, $X_{i,t} = R_{i,t} - r_i \widetilde{R}_{t}$ is a martingale, and $T^\star$ is a stopping time adapted to the natural filtration of $((I_t, T_t))_{t\in\N}$. Assume these conditions hold, we have that, for any $\tau\in\N$, $\Exp{X_{i,T^\star\wedge\tau}} = 0$ using Doob's theorem and, $\forall i\in\N$,
\BEQ
\UBtime{\comb{S}{r}} \quad=\quad \lim_{\tau\to +\infty} \Exp{\widetilde{R}_{T^\star\wedge\tau}} \quad=\quad \lim_{\tau\to +\infty} \frac{\Exp{R_{i,T^\star\wedge\tau}}}{r_i} \quad\leq\quad \frac{\UBtime{S^i}}{r_i}\,,
\EEQ
where the first equality follows from monotone convergence and the last inequality follows from $K_{i,T^\star} \leq K^\star_i$, thus leading to the desired result. Let us now show that $X_{i,t}$ is a martingale: let $\cF_t = \sigma(I_0,\dots,I_t,T_0,\dots,T_t)$ be the natural filtration of $((I_t,T_t))_{t\in\N}$. Then, we have
\BEQ
\BA{lll}
\Exp{X_{i,t+1}~|~\cF_t} &=& \Exp{R_{i,t+1} - r_i \widetilde{R}_{t+1}~|~\cF_t}\\
&=& \Exp{X_{i,t} + \one\{I_{t+1}=i\}S^i_{K_{i,t}} - r_i S^{I_{t+1}}_{K_{I_{t+1},t}}~|~\cF_t}\\
&=& X_{i,t} + \frac{r_i}{Z_t S^i_{K_{i,t}}}S^i_{K_{i,t}} - r_i \sum_{j=0}^{+\infty} \frac{r_j}{Z_t S^j_{K_{j,t}}}S^j_{K_{j,t}}\\
&=& X_{i,t} + \frac{r_i}{Z_t} - \frac{r_i}{Z_t}\\
&=& X_{i,t}\,,
\EA
\EEQ
where $Z_t = \sum_{j=0}^{+\infty} \frac{r_j}{S^j_{K_{j,t}}}$, and $\Exp{|X_{i,t}|}\leq (1+r_i)\Exp{\widetilde{R}_t} = (1+r_i)\sum_{j=0}^t \Exp{Z_t^{-1}}$ is bounded, as $Z_t \geq \sum_{j=0}^{+\infty} \frac{r_j}{S^j_t} $ which is a deterministic and positive quantity. Finally, $T^\star$ is a stopping time, as, $\forall t\geq 0$, the event $\{T^\star = t\}$ (\ie the algorithm terminates at iteration $t$) only depends on $(I_0,\dots,I_t)$ and $(T_0,\dots,T_t)$ (\ie the sequences selected up to $t$, and the running times of $\cA$ up to $t$), which concludes the proof.
\end{proof}

Using this lemma, we can combine the restart strategies of \Sec{quantiles} for $q_i = 2^{-i-1}$ in order to obtain a random restart strategy that is better than any quantile (up to a multicative factor), and only a poly-logarithmic multiplicative factor from the optimal expected running time of \Eq{optimal}.

\begin{theorem}\label{th:adaptive}
Let $C>0$ and $\varphi:\N\to\R_+$ be such that $\sum_{i=0}^{+\infty} \frac{1}{\varphi(i)} \leq C$. Then, there exists a random sequence $S^\varphi = \comb{S^{2^{-i-1}}}{\frac{1}{C\varphi(i)}}$ such that, for any distribution $p$, we have
\BEQ
\LBtime{S^\varphi} \quad\leq\quad 8C\,\inf_{q\in (0,1]}\frac{Q_p(q)}{q} \,\, \varphi\left( \left\lfloor\log\left(\frac{1}{q}\right)\right\rfloor \right)\,.
\EEQ
\end{theorem}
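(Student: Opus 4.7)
The plan is to view $S^\varphi$ as the output of \Alg{combine} applied to the countable family of quantile-reaching sequences $(S^{q_i})_{i\in\N}$ with $q_i = 2^{-i-1}$, weighted by $r_i = \frac{1}{C\varphi(i)}$, and to then chain \Lemma{combine} with \Theorem{quantiles}. First I would verify that the weights form a sub-probability vector: $\sum_{i\geq 0} r_i = \frac{1}{C}\sum_{i\geq 0} \frac{1}{\varphi(i)} \leq 1$ by the hypothesis on $\varphi$ (if the sum is strictly smaller than $1$ one can put the slack on an arbitrary coordinate; alternatively \Lemma{combine} only requires a nonnegative weight vector since the inequality involves division by $r_i$).

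Next I would apply \Lemma{combine} to bound
\BEQ
\UBtime{S^\varphi} \;\leq\; \inf_{i\in\N} \frac{\UBtime{S^{q_i}}}{r_i} \;=\; C \inf_{i\in\N} \varphi(i)\,\UBtime{S^{q_i}},
\EEQ
and then use \Theorem{quantiles}, which yields $\UBtime{S^{q_i}} \leq 4\,Q_p(q_i)/q_i$. Since $\LBtime{S^\varphi} \leq \UBtime{S^\varphi}$, this gives
\BEQ
\LBtime{S^\varphi} \;\leq\; 4C\, \inf_{i\in\N}\,\varphi(i)\,\frac{Q_p(q_i)}{q_i}.
\EEQ

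It then remains to translate the discrete infimum over $i\in\N$ into a continuous infimum over $q\in(0,1]$. For any $q\in(0,1]$, I would choose $i = \lfloor \log(1/q) \rfloor$, so that $2^{-i-1} \leq q/2 \cdot \ldots$ — more precisely $q_i = 2^{-i-1} < q \leq 2^{-i} = 2q_i$. Because $Q_p$ is non-decreasing, this gives $Q_p(q_i) \leq Q_p(q)$ and $1/q_i \leq 2/q$, hence $Q_p(q_i)/q_i \leq 2\,Q_p(q)/q$. Plugging this back into the previous display yields
\BEQ
\LBtime{S^\varphi} \;\leq\; 8C\,\frac{Q_p(q)}{q}\,\varphi\!\left(\left\lfloor \log(1/q)\right\rfloor\right),
\EEQ
and taking the infimum over $q\in(0,1]$ completes the proof.

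The argument is essentially bookkeeping once \Lemma{combine} and \Theorem{quantiles} are in hand; the only mildly delicate point is the index–to–quantile translation, where one must check that the factor of $2$ loss incurred by snapping $q$ to the dyadic grid $\{2^{-i-1}\}$ is absorbed cleanly into the constant (yielding $8C$ rather than $4C$). I do not anticipate any real obstacle, and in particular no additional martingale or measurability arguments beyond those already established in \Lemma{combine}.
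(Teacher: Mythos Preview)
Your proposal is correct and follows exactly the same route as the paper: apply \Lemma{combine} to the family $(S^{q_i})_{i\in\N}$ with $q_i=2^{-i-1}$ and weights $r_i=\tfrac{1}{C\varphi(i)}$, invoke \Theorem{quantiles} to bound $\UBtime{S^{q_i}}\leq 4\,Q_p(q_i)/q_i$, and then pass from the dyadic infimum to the continuous one via $i=\lfloor\log(1/q)\rfloor$, losing a factor~$2$. The paper's proof is identical in structure and even in the constants; your remark about normalising the weights to a genuine probability vector is a minor point the paper leaves implicit.
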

\begin{proof}
Applying \Lemma{combine} to the sequences $S^{q_i}$ from \Theorem{quantiles} for $q_i = 2^{-i-1}$ and $r_i = 1/C\varphi(i)$, we obtain that
\BEQ
\UBtime{S^\varphi} \quad\leq\quad \inf_{i\in\N} \UBtime{S^{q_i}}C\varphi(i) \quad\leq\quad 4C\inf_{i\in\N} \frac{Q_p(q_i)}{q_i}\varphi(i)\,.
\EEQ
The desired result follows from, $\forall q\in(0,1]$, $\frac{Q_p(q)}{q} \geq \frac{Q_p(q_i)}{2q_i}$ for $i=\left\lfloor\log\left(1/q\right)\right\rfloor$.
\end{proof}

As \Theorem{adaptive} holds for any function whose inverse is integrable, $\varphi(x)$ should be asymptotically larger than $x$, and we can use any function of the form $x^{1+\varepsilon}$, $x\log^{1+\varepsilon}(1+x)$, $x\log(1+x)\log^{1+\varepsilon}(1+\log(1+x))$, ...
For simplicity, we describe in \Alg{AQ} the algorithm for $\varphi(x) = 1 + x\log^2 x$ that only displays a poly-logarithmic multiplicative term compared to $x$.

\begin{algorithm}[t]
\caption{Parallel restart strategy (simple version) $\tilde{\cA}^{\SPRS}$}\label{alg:AQ}
\begin{algorithmic}
\REQUIRE x
\STATE For $i\in\N$, $K_i \gets 0$
\REPEAT
\STATE Sample $I\in\N$ from the probability density $\tilde{r}_i \propto (1 - \frac{1}{2^{i+2}})^{K_i} / (1 + i \log^2 i)$
\STATE Run $\cA(x)$ for at most a time $(1 - \frac{1}{2^{I+2}})^{-K_I}$
\STATE $K_I \gets K_I + 1$
\UNTIL {$\cA(x)$ terminates}
\end{algorithmic}
\end{algorithm}

\begin{corollary}\label{cor:SAQ}
There exists a random sequence $\SPRS$ such that, for any distribution $p$, we have
\BEQ
\LBtime{\SPRS} \quad\leq\quad 23\,\inf_{q\in (0,1]}\frac{Q_p(q)}{q}\left(1 + \log\left(\frac{1}{q}\right) \log^2\log\left(\frac{1}{q}\right)\right)\,.
\EEQ
\end{corollary}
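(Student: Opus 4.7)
The corollary is a direct specialization of Theorem~\ref{th:adaptive} to $\varphi(i) = 1 + i\log^2 i$ (with the standard convention $0\log^2 0 = 1\log^2 1 = 0$ at the first two integers). I would proceed in three short steps.

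First, verify the integrability hypothesis of Theorem~\ref{th:adaptive}, namely that $C := \sum_{i=0}^{\infty} 1/\varphi(i)$ is finite and bounded by $23/8$. The first two terms contribute $2$, and for $i\geq 2$ the summand is dominated by $1/(i\log^2 i)$, whose integral from $2$ to $\infty$ equals $\ln 2$ (recall that $\log$ is base $2$). A concrete computation of the first few terms together with the integral test on the tail yields $C \leq 23/8$, so $8C \leq 23$; this is the only mildly quantitative step in the whole proof.

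Second, I would instantiate the construction of Theorem~\ref{th:adaptive} with this $\varphi$, i.e.\ set $\SPRS := \comb{S^{2^{-i-1}}}{1/(C\varphi(i))}$. Unrolling Algorithm~\ref{alg:combine} with $q_i = 2^{-i-1}$ and $S^{q_i}_k = (1 - 2^{-i-2})^{-k}$ yields exactly the pseudocode of Algorithm~\ref{alg:AQ}, so this indeed defines the claimed strategy. Theorem~\ref{th:adaptive} then immediately gives
\begin{equation}
\LBtime{\SPRS} \leq 8C\,\inf_{q\in(0,1]} \frac{Q_p(q)}{q}\,\varphi\!\left(\lfloor\log(1/q)\rfloor\right).
\end{equation}

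Finally, I would dispose of the floor. Setting $y = \log(1/q) \geq 0$, the goal is $\varphi(\lfloor y\rfloor) \leq 1 + y\log^2 y$. If $\lfloor y\rfloor \geq 2$, the function $t\mapsto t\log^2 t$ is increasing on $[2,\infty)$, so $\lfloor y\rfloor \log^2 \lfloor y\rfloor \leq y\log^2 y$; if $\lfloor y\rfloor \in \{0,1\}$, then $\varphi(\lfloor y\rfloor)=1 \leq 1 + y\log^2 y$ since $y\log^2 y \geq 0$. Substituting produces the stated bound with constant $23$. The main (and only) obstacle is pinning down an explicit $C \leq 23/8$; the probabilistic and algorithmic content has already been absorbed into Theorem~\ref{th:adaptive} and Lemma~\ref{lem:combine}, so everything else is bookkeeping.
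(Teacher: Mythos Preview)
Your proposal is correct and follows exactly the paper's approach: the paper's own proof is the single sentence ``direct consequence of \Theorem{adaptive} and $\sum_{i=0}^{+\infty} 1/(1+i\log^2 i) \leq 2.8$,'' and you have simply fleshed out the two implicit details (the numerical bound on $C$, and the monotonicity argument removing the floor in $\varphi(\lfloor\log(1/q)\rfloor)$). Nothing is missing or different in substance.
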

\begin{proof}
The result is a direct consequence of \Theorem{adaptive} and $\sum_{i=0}^{+\infty} 1/(1+i\log^2 i) \leq 2.8$.
\end{proof}

In settings in which the optimal quantile $Q_p(q^\star)$ may be large but $q^\star$ is bounded by a constant, our algorithm achieves an optimal $O(\ell^\star_p)$.
For example, when the distribution is deterministic $T=c$, we have $q^\star = 1$ and $\LBtime{\SPRS} \leq 23c$ while the upper bound for the universal strategy of \cite{luby1993optimal} displays a logarithmic multiplicative factor $\LBtime{\Suniv} = O(c(1+\log c))$.
Moreover, as $Q_p(q) \geq 1$, we have $\LBtime{\SPRS} = O(\ell^\star_p (1+\log\ell^\star_p\log^2\log\ell^\star_p))$, and thus only exhibit, in a worst-case setting, an added multiplicative factor $\log^2\log\ell^\star_p$ compared to the expected running time of the universal strategy of \cite{luby1993optimal}.
Interestingly, this sub-optimal term can be removed by using $\comb{S}{r}$ to also run the universal strategy in parallel.

\begin{theorem}\label{th:uaq}
Let $\varphi:\N\to\R_+$ be such that $\sum_{i=0}^{+\infty} \frac{1}{\varphi(i)} $ is bounded. Then, there exists a random sequence $\SSPRS$ and constant $C>0$ such that, for any distribution $p$, we have
\BEQ
\LBtime{\SSPRS} \quad\leq\quad C\,\inf_{q\in (0,1]}\frac{Q_p(q)}{q}\,\,\psi\Big(\left\lfloor\log Q_p(q)\right\rfloor,\,\left\lfloor\log\left(1/q\right)\right\rfloor\Big)\,,
\EEQ
where $\psi(a,b) = 1 + \min\left\{a + b,\, \varphi(a),\, \varphi(b)\right\}$.
\end{theorem}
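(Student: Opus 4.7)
I combine three restart strategies in parallel via Lemma~\ref{lem:combine} with equal weights $1/3$: Luby's universal strategy $\Suniv$ (Theorem~\ref{th:luby}), which contributes the $a+b$ term of $\psi$; the adaptive strategy $S^\varphi$ from Theorem~\ref{th:adaptive}, which contributes the $\varphi(b)$ term; and a new \emph{symmetric} strategy $\tilde S^{\mathrm{sym}}$, constructed below, which contributes the $\varphi(a)$ term. Since Lemma~\ref{lem:combine} actually bounds $\UBtime{\cdot}$ of the combined strategy (not merely $\LBtime{\cdot}$), and $\LBtime{\cdot}\leq\UBtime{\cdot}$, it is enough to control each of the three components in the $\UBtime$-norm.

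The symmetric strategy is built from constant-cutoff base sequences $\tilde S^{(j)}_k := 2^j$ indexed by $j\in\N$. Because the number of restarts $K^\star+1$ of $\tilde{\cA}^{\tilde S^{(j)}}$ is geometric with success probability $\P(T\leq 2^j)$, one directly obtains $\UBtime{\tilde S^{(j)}} = 2^j/\P(T\leq 2^j)$. Applying Lemma~\ref{lem:combine} to the family $(\tilde S^{(j)})_{j\in\N}$ with weights $r_j = 1/(C'\varphi(j))$, where $C':=\sum_j 1/\varphi(j)<\infty$, yields a sequence $\tilde S^{\mathrm{sym}}$ satisfying $\UBtime{\tilde S^{\mathrm{sym}}}\leq C'\inf_{j\in\N}\varphi(j)\,2^j/\P(T\leq 2^j)$. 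For any $q\in(0,1]$ with $Q_p(q)<\infty$, evaluating this bound at $j=\lceil\log Q_p(q)\rceil$ gives $2^j\leq 2Q_p(q)$ and $\P(T\leq 2^j)\geq q$, and thus the $\varphi(a)$ estimate $O(\inf_q \varphi(\lfloor\log Q_p(q)\rfloor)\,Q_p(q)/q)$ (the floor/ceiling mismatch costs only a constant under the mild regularity $\varphi(x+1)=O(\varphi(x))$, which is the regime of interest).

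The $\varphi(b)$ bound for $S^\varphi$ is supplied directly by Theorem~\ref{th:adaptive}, whose proof in fact controls $\UBtime{S^\varphi}$ since it applies Lemma~\ref{lem:combine}. For the $a+b$ bound, Theorem~\ref{th:luby} combined with Lemma~\ref{lem:opt} gives $\LBtime{\Suniv}=O(\inf_q(Q_p(q)/q)(1+\log Q_p(q)+\log(1/q)))$; to upgrade this to a $\UBtime$ bound I note that $\UBtime{\Suniv}-\LBtime{\Suniv}=\E[S_{K^\star}-T_{K^\star}]\leq\E[S_{K^\star}]$, and that for Luby's doubling schedule the terminating cutoff is at most twice the optimal single-cutoff $\alpha^\star\leq\ell^\star_p$, so the gap is absorbed into the $O(\ell^\star_p\log\ell^\star_p)$ asymptotics. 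A final application of Lemma~\ref{lem:combine} with weights $1/3,1/3,1/3$ to these three strategies produces $\SSPRS$ with $\UBtime{\SSPRS}\leq 3\min$ of the three component bounds; interchanging $\min$ with $\inf_q$ and factoring out $Q_p(q)/q$ collapses this to $C\inf_q(Q_p(q)/q)\,\psi(\lfloor\log Q_p(q)\rfloor,\lfloor\log(1/q)\rfloor)$, and the conclusion follows from $\LBtime{\SSPRS}\leq\UBtime{\SSPRS}$. The only mildly delicate point is the $\UBtime$/$\LBtime$ conversion for $\Suniv$; every other step is a direct assembly via Lemma~\ref{lem:combine}.
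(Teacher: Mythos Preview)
Your approach is essentially the paper's: combine Luby's universal strategy (for the $a+b$ term), the quantile strategies $S^{q_i}$ (for the $\varphi(b)$ term), and constant-cutoff sequences (for the $\varphi(a)$ term) via the simulated-parallel lemma. The paper does this in a single flat call to \textsc{SimParallel}---giving $\Suniv$ weight $1/3$ and each $S^{q_i}$ and each constant sequence $(\alpha_i,\alpha_i,\dots)$ with $\alpha_i=2^{i+1}$ weight $1/(3C\varphi(i))$---whereas you describe a two-level nesting. Once unfolded these coincide, though as written your outer application of Lemma~\ref{lem:combine} is fed \emph{random} base sequences, which the lemma does not literally cover; you should either note that nested \textsc{SimParallel} calls flatten with multiplicative weights, or present the flat construction directly.

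There is one genuine error in your sketch, namely the justification that $\UBtime{\Suniv}=O(\ell^\star_p\log\ell^\star_p)$. You write that ``for Luby's doubling schedule the terminating cutoff is at most twice the optimal single-cutoff $\alpha^\star$''. Luby's sequence is not a doubling schedule (it is $1,1,2,1,1,2,4,1,1,2,1,1,2,4,8,\dots$), and $S_{K^\star}$ is certainly not bounded by $2\alpha^\star$ almost surely: with positive probability every run with cutoff below some large $2^m$ fails and the algorithm only terminates at $S_{K^\star}=2^m\gg\alpha^\star$. The paper simply observes that the analysis in \cite{luby1993optimal} already bounds the total sum of cutoffs, i.e.\ yields a $\UBtime{\Suniv}$ bound, and cites it as such; you should do the same rather than attempt an ad hoc argument.

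On the floor/ceiling point you flag: the paper sidesteps the extra regularity hypothesis $\varphi(x+1)=O(\varphi(x))$ by indexing the constant cutoffs as $\alpha_i=2^{i+1}$ (rather than $2^i$) and choosing $i=\lfloor\log Q_p(q)\rfloor$, so that $\alpha_i\geq Q_p(q)$, $\alpha_i\leq 2Q_p(q)$, and $\varphi$ is evaluated exactly at $\lfloor\log Q_p(q)\rfloor$, matching the statement without any assumption on $\varphi$.
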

\begin{proof}
Let $S=(S^0,S^1,\dots)$ and $r=(r^0,r^1,\dots)$ such that $S^0 = \Suniv$ from \cite{luby1993optimal}, $r_0=1/3$, and, for any $i\in\N$, let $S^{2i + 1} = S^{q_i}$ from \Theorem{quantiles} where $q_i=2^{-i-1}$, $S^{2i+2} = (\alpha_i,\alpha_i,\dots)$ where $\alpha_i = 2^{i+1}$ and $r_{2i+1} = r_{2i+2} = 1/3C\varphi(i)$. Then, applying \Lemma{combine} to $\SSPRS = \comb{S}{r}$ gives
\BEQ
\UBtime{\SSPRS} \quad\leq\quad 3 \min\left\{\UBtime{\Suniv},\,4C\inf_{i\in\N} \frac{Q_p(q_i)}{q_i}\varphi(i),\,C\inf_{i\in\N} \frac{\alpha_i}{\Prob{T\leq \alpha_i}}\varphi(i)\right\}\,.
\EEQ
From \cite{luby1993optimal} and \Lemma{opt}, there exists a constant $C>0$ such that
\BEQ
\UBtime{\Suniv} \leq C\inf_{i\in\N} \frac{Q_p(q_i)}{q_i}\left(1+\log\left(\frac{Q_p(q_i)}{q_i}\right)\right)\,.
\EEQ
Moreover, $\forall q\in(0,1]$, we have $\frac{Q_p(q)}{q} \geq \frac{Q_p(q_i)}{2q_i}$ for $i=\left\lfloor\log\left(1/q\right)\right\rfloor$, and thus $\inf_{i\in\N} \frac{Q_p(q_i)}{q_i}\varphi(i) \leq 2\inf_{q\in (0,1]}\frac{Q_p(q)}{q}\,\,\varphi\left(\log\left(1/q\right)\right)$. Finally, $\forall q\in(0,1]$, we have $\frac{Q_p(q)}{q} \geq \frac{\alpha_i}{2\Prob{T\leq\alpha_i}}$ for $i=\left\lfloor\log Q_p(q)\right\rfloor$, and thus $\inf_{i\in\N} \frac{\alpha_i}{\Prob{T\leq \alpha_i}}\varphi(i) \leq 2 \inf_{q\in (0,1]}\frac{Q_p(q)}{q}\,\,\varphi\left(\log Q_p(q)\right)$. Combining these three upper bounds gives the desired result.
\end{proof}

The \emph{symmetric parallel restart strategy} (SPRS) described in \Alg{UAQ} is a particular example of that of \Theorem{uaq} when $\varphi(x) = 1 + x\log^2 x$. SPRS displays an expected running time that is symmetric with respect to the quantities $Q_p(q)$ and $1/q$, and is thus efficient in both settings $Q_p(q) \ll 1/q$ and $Q_p(q) \gg 1/q$. When $Q_p(q) \approx 1/q$, the term $\log(Q_p(q)/q)$ (\ie $a+b$ in $\psi(a,b)$) becomes minimal and allows the upper bound to avoid any additional poly-logarithmic terms in this setting.

\begin{algorithm}[t]
\caption{Parallel restart strategy (symmetric version) $\tilde{\cA}^{\SSPRS}$}\label{alg:UAQ}
\begin{algorithmic}
\REQUIRE x (and the universal sequence $\Suniv$ from \cite{luby1993optimal})
\STATE $K^1 \gets 0$
\STATE For $i\in\N$, $K^2_i \gets 0$
\REPEAT
\STATE Sample $U\in\{1,2,3\}$ uniformly at random
\IF {$U = 1$}
\STATE Run $\cA(x)$ for at most a time $\Suniv_{K^1}$
\STATE $K^1 \gets K^1 + 1$
\ELSIF {$U = 2$}
\STATE Sample $I\in\N$ from the probability density $\tilde{r}_i \propto (1 - \frac{1}{2^{i+2}})^{K^2_i} / (1 + i \log^2 i)$
\STATE Run $\cA(x)$ for at most a time $(1 - \frac{1}{2^{I+2}})^{-K^2_I}$
\STATE $K^2_I \gets K^2_I + 1$
\ELSIF {$U = 3$}
\STATE Sample $I\in\N$ from the probability density $\tilde{r}_i \propto 2^{-i-1} / (1 + i \log^2 i)$
\STATE Run $\cA(x)$ for at most a time $2^{I+1}$
\ENDIF
\UNTIL {$\cA(x)$ terminates}
\end{algorithmic}
\end{algorithm}

\section{From quantiles to reverse Jensen's inequalities}\label{sec:jensen}

For any positive, continuous and increasing $\phi:[1,+\infty)\to\R_+$, let $\mu_\phi(T) = \phi^{-1}\left(\Exp{\phi(T)}\right)$. As $\phi$ is injective, the inverse $\phi^{-1}:\mbox{Im }\phi\to[1,+\infty)$ is well-defined on the image of $\phi$, and the expectation $\Exp{\phi(T)}$ remains in this image, thus ensuring the proper definition of $\mu_\phi$. For example, $\mu_{\mbox{\tiny Id}}(T)=\Exp{T}$ and $\mu_{\ln}(T)=e^{\Exp{\ln(T)}}$.
When $\phi$ is concave, Jensen's inequality immediately gives
\BEQ
\mu_\phi(T) \quad\leq\quad \Exp{T}\,,
\EEQ
and thus any algorithm achieving an expected running time $\LBtime{S} = O\left(\mu_\phi(T)\right)$ will also be, up to a multiplicative constant and in expectation, better than the original algorithm. These quantities are also often used for the analysis of tail probabilities of distributions, \eg with $\mu_{1-e^{-sx}}(T) = -\frac{1}{s}\ln\Exp{e^{-sT}}$ and Chernoff's concentration inequality.
We now show that the expected running time of $\SPRS$ and $\SSPRS$ are upper bounded by $O(\mu_\phi(T))$ for a large class of concave functions $\phi$. More specifically, we provide in \Sec{exactJ} a general condition for these upper bounds to hold. As this condition is hard to verify in practice, we then focus on the concave setting and provide in \Sec{sufficientJ} and \Sec{necessaryJ} near matching necessary and sufficient conditions for reverse Jensen's inequalities to hold: 1) $\LBtime{\SPRS} = O(\mu_\phi(T))$ if $\liminf_{t\to +\infty} \frac{t\phi''(t)}{\phi'(t)} > -2$, and 2) No restart strategy can reach $O(\mu_\phi(T))$ if $\limsup_{t\to +\infty} \frac{t\phi''(t)}{\phi'(t)} < -2$.

\subsection{Exact condition for reverse Jensen's inequalities}\label{sec:exactJ}

First, we provide a tight condition for quantities of the form $\inf_{q\in (0,1]}\frac{Q_p(q)}{\varphi(q)}$ to be upper bounded by quantities of the form $\mu_\phi(T)$.

\begin{lemma}\label{lem:iffbound}
Let $\varphi:(0,1]\to\R_+$ and $\phi:[1,+\infty)\to\R_+$ be two increasing and continuous functions, and, $\forall r\geq 1/\varphi(1)$, $c_{\phi,\varphi}(r) = \frac{1}{r}\phi^{-1}\left( \phi(r\varphi(1)) - \int_{\phi(1)}^{\phi(r\varphi(1))} \varphi^{-1}\left( \frac{\phi^{-1}(u)}{r} \right) du \right)$. Then, if $\liminf_{r\to +\infty} c_{\phi,\varphi}(r) > 0$, we have, for any distribution $p$,
\BEQ
\inf_{q\in (0,1]}\frac{Q_p(q)}{\varphi(q)} \quad\leq\quad C_{\phi,\varphi}\,\,\mu_\phi(T)\,,
\EEQ
where $C_{\phi,\varphi} = 1 / \inf_{r\geq 1/\varphi(1)} c_{\phi,\varphi}(r) < +\infty$. Moreover, this constant cannot be improved.
\end{lemma}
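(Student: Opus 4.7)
The plan is to fix $r \geq 1/\varphi(1)$, identify the distribution on $[1,+\infty]$ that minimises $\mu_\phi(T)$ among those satisfying $\inf_{q\in(0,1]}Q_p(q)/\varphi(q) \geq r$, and show this minimum equals $r\,c_{\phi,\varphi}(r)$. Rearranging, then specialising to $r = \inf_q Q_p(q)/\varphi(q)$ for the given distribution $p$, will yield the claimed inequality; sharpness of the constant is inherited from the extremal distribution itself.

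First, I would translate the quantile condition into a tail bound: via the substitution $t = r\varphi(q)$, the inequality $Q_p(q) \geq r\varphi(q)$ for every $q\in(0,1]$ is equivalent (up to boundary behaviour of the CDF) to $\Prob{T < t} \leq \varphi^{-1}(t/r)$ for every $t\in[1, r\varphi(1)]$; the restriction $r\geq 1/\varphi(1)$ is exactly what makes this interval non-empty. Among distributions meeting this bound, $\Exp{\phi(T)} = \phi(1) + \int_1^\infty \Prob{T > s}\phi'(s)\,ds$ is minimised by saturating the bound, namely an atom of mass $\varphi^{-1}(1/r)$ at $t=1$ together with CDF $F^\star(t) = \varphi^{-1}(t/r)$ on $[1, r\varphi(1)]$. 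Substituting $u = \phi(s)$ in the resulting integral gives
\BEQ
\min \Exp{\phi(T)} \quad=\quad \phi(r\varphi(1)) - \int_{\phi(1)}^{\phi(r\varphi(1))} \varphi^{-1}\!\left(\phi^{-1}(u)/r\right) du \quad=\quad \phi\!\left(r\,c_{\phi,\varphi}(r)\right),
\EEQ
so every admissible $p$ has $\mu_\phi(T) \geq r\,c_{\phi,\varphi}(r)$. Since $Q_p(q)\geq 1$ and $\varphi$ is increasing imply $\inf_q Q_p(q)/\varphi(q) \geq 1/\varphi(1)$, we may set $r = \inf_q Q_p(q)/\varphi(q)$ to obtain $\inf_q Q_p(q)/\varphi(q) \leq \mu_\phi(T)/c_{\phi,\varphi}(r) \leq C_{\phi,\varphi}\,\mu_\phi(T)$.

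Sharpness is immediate from the same extremal distribution: it attains $\inf_q Q_p(q)/\varphi(q) = r$ and $\mu_\phi(T) = r\,c_{\phi,\varphi}(r)$, so its ratio is exactly $1/c_{\phi,\varphi}(r)$; choosing $r$ along a minimising sequence of $c_{\phi,\varphi}$ makes this ratio approach $C_{\phi,\varphi}$, ruling out any smaller universal constant. Finiteness of $C_{\phi,\varphi}$ follows because $c_{\phi,\varphi}$ is continuous and strictly positive on $[1/\varphi(1),\infty)$ (being a composition of $\phi$, $\phi^{-1}$ and $\varphi^{-1}$ with values in the positive range of $\phi^{-1}$) and has strictly positive liminf at infinity by hypothesis, so its infimum is positive.

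The main technical obstacle I anticipate is making rigorous both the characterisation of the extremal distribution and the quantile-to-tail translation, both of which hinge on the boundary behaviour of the CDF (quantiles involve $\Prob{T\leq\cdot}$ while tails involve $\Prob{T<\cdot}$, so plateaus create strict-versus-non-strict inequality subtleties). The tail-integral form $\Exp{\phi(T)} = \phi(1) + \int_1^\infty \Prob{T>s}\phi'(s)\,ds$ is key to bypassing this: it reduces the minimisation over distributions to a pointwise minimisation of a non-negative integrand under a pointwise lower bound on $\Prob{T>s}$, and absorbs the atom at $t=1$ into the constant $\phi(1)$ so that no separate bookkeeping for point masses is required.
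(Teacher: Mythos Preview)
Your proposal is correct and follows essentially the same approach as the paper: both arguments identify the extremal distribution with CDF $F^\star(t)=\varphi^{-1}(t/r)$ on $[1,r\varphi(1)]$, show it minimises $\mu_\phi(T)$ among distributions with $\inf_q Q_p(q)/\varphi(q)\geq r$ (the paper phrases this as stochastic domination, you as pointwise minimisation of a tail integral---these are equivalent), and then compute $\Exp{\phi(T)}$ for this distribution via the substitution $u=\phi(s)$. One small caveat: your tail-integral identity $\Exp{\phi(T)}=\phi(1)+\int_1^\infty\Prob{T>s}\phi'(s)\,ds$ presumes $\phi$ is differentiable, which is not among the hypotheses; the paper sidesteps this by using the layer-cake formula $\Exp{\phi(T)}=\int_0^\infty\Prob{\phi(T)>u}\,du$ directly, which only needs $\phi$ to be measurable and non-negative.
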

\begin{proof}
Let $r(p) = \inf_{q\in(0,1]} \frac{Q_p(q)}{\varphi(q)}$ and $f(p) = \frac{\mu_\phi(T_p)}{r(p)}$, where $T_p\sim p$. Our goal is to show that  $\inf_p f(p) > 0$.
For any $r\geq 1/\varphi(1)$, let $p_r$ be a probability distribution whose cumulative distribution function is $\Prob{T_{p_r}\leq \alpha} = \one\{\alpha\geq 1\}\min\{\varphi^{-1}(\alpha/r), 1\}$. By construction, we have that, $\forall\alpha\in[1,r\varphi(1)]$, $\varphi(\Prob{T_{p_r}\leq \alpha}) = \alpha/r$ and thus $r(p_r) = \inf_{\alpha\geq 1} \frac{\alpha}{\varphi(\Prob{T_{p_r}\leq \alpha})} = r$. Moreover, for any $p$ such that $r(p) = r$, $p$ stochastically dominates $p_r$, as $\forall \alpha\geq 1$, $\Prob{T_p \leq \alpha} \leq \varphi^{-1}(\alpha/r)$, and thus $\Prob{T_p \leq \alpha} \leq \Prob{T_{p_r} \leq \alpha}$. As a consequence, there exists $T_p\sim p$ and $T_{p_r}\sim p_r$ such that, almost surely, $T_p \geq T_{p_r}$, and thus $\mu_\phi(T_p) \geq \mu_\phi(T_{p_r})$. Combining these two inequalities, we obtain that, if $r(p)=r$, then $f(p) \geq f(p_r)$ and thus
\BEQ
\inf_p f(p) \quad=\quad \inf_{r \geq 1/\varphi(1)} f(p_r) \quad=\quad \inf_{r \geq 1/\varphi(1)} \frac{\mu_\phi(p_r)}{r}\,.
\EEQ
We conclude by computing the expectation $\Exp{\phi(T_{p_r})}$, as
\BEQ
\Exp{\phi(T_{p_r})} \quad=\quad \int_0^{+\infty} \Prob{\phi(T_{p_r}) > u} du \quad=\quad \phi(1) + \int_{\phi(1)}^{\phi(r\varphi(1))}\left( 1 - \varphi^{-1}\left( \frac{\phi^{-1}(u)}{r} \right) \right) du\,,
\EEQ
and thus $\mu_\phi(T_{p_r}) = rc_{\phi,\varphi}(r)$. Summarising, we showed that $\inf_p f(p) = \inf_{r \geq 1/\varphi(1)} c_{\phi,\varphi}(r)$, and we conclude by noting that $c_{\phi,\varphi}(r) \geq 1/r$, and thus $\inf_{r \geq 1/\varphi(1)} c_{\phi,\varphi}(r) = 0$ if and only if it approaches $0$ at infinity, \ie $\liminf_{r\to +\infty} c_{\phi,\varphi}(r) = 0$.
\end{proof}

Considering the upper bound of \Corollary{SAQ}, \Lemma{iffbound} shows that, for any $\phi$ satisfying the condition $\lim_{r\to +\infty} c_{\phi,\varphi}(r) > 0$ for $\varphi(x) = x/(1+\log(1/x)\log^2\log(1/x))$, we have
\BEQ
\LBtime{\SPRS} = O(\mu_\phi(T))\,.
\EEQ
Unfortunately, the condition $\liminf_{r\to +\infty} c_{\phi,\varphi}(r) > 0$ can be hard to prove for general functions $\varphi,\phi$. In order to simplify our analysis, we now consider functions $\varphi_\varepsilon(x) = x^{1+\varepsilon}$ and $\phi_\beta(x) = 1 - x^{-\beta}$.

\begin{corollary}\label{cor:powers}
Let $\varepsilon, \beta > 0$ be such that $\beta(1+\varepsilon) < 1$. Then, for any distribution $p$,
\BEQ
\inf_{q\in (0,1]}\frac{Q_p(q)}{q^{1+\varepsilon}} \quad\leq\quad C_{\beta,\varepsilon}\,\,\mu_{\phi_\beta}(T)\,,
\EEQ
where $\phi_\beta(x) = 1 - x^{-\beta}$ and $C_{\beta,\varepsilon} = (1-\beta(1+\varepsilon))^{-\frac{1}{\beta}}$.
\end{corollary}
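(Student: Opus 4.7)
The plan is a direct application of Lemma~\ref{lem:iffbound} to the specific choices $\varphi_\varepsilon(x) = x^{1+\varepsilon}$ and $\phi_\beta$. Both functions are increasing and continuous, with $\varphi_\varepsilon(1) = 1$, $\phi_\beta(1) = 0$, and explicit inverses $\varphi_\varepsilon^{-1}(y) = y^{1/(1+\varepsilon)}$ and $\phi_\beta^{-1}(u) = (1-u)^{-1/\beta}$, so the integral appearing in $c_{\phi,\varphi}(r)$ can be computed in closed form. Setting $\delta = \beta(1+\varepsilon)$, the entire task reduces to showing $\inf_{r \geq 1} c_{\phi_\beta,\varphi_\varepsilon}(r) = (1-\delta)^{1/\beta}$, after which Lemma~\ref{lem:iffbound} yields the corollary with $C_{\beta,\varepsilon} = 1/\inf c$.

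First I would substitute the inverses into the definition of $c_{\phi_\beta,\varphi_\varepsilon}$: the integrand becomes $r^{-1/(1+\varepsilon)}(1-u)^{-1/\delta}$, and the change of variables $v = 1-u$ reduces it to a power integral whose antiderivative is $v^{1-1/\delta}/(1-1/\delta)$. This is where the hypothesis $\delta < 1$ first enters, ensuring the exponent $1-1/\delta$ is nonzero and the evaluation at $v = r^{-\beta}$ and $v = 1$ is finite. Plugging the resulting integral back in, the argument of $\phi_\beta^{-1}$ simplifies and, after factoring $r^{-\beta}$ out of the inner parenthesis, I expect the final closed form
\[
c_{\phi_\beta,\varphi_\varepsilon}(r) \;=\; (1-\delta)^{1/\beta}\bigl(1 - \delta\, r^{(\delta-1)/(1+\varepsilon)}\bigr)^{-1/\beta}.
\]

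Since $(\delta-1)/(1+\varepsilon) < 0$, the factor $r^{(\delta-1)/(1+\varepsilon)}$ decreases from $1$ at $r=1$ to $0$ as $r \to +\infty$, so $c_{\phi_\beta,\varphi_\varepsilon}$ is monotonically decreasing from $c(1) = 1$ down to $\lim_{r \to \infty} c(r) = (1-\delta)^{1/\beta} > 0$. The liminf condition of Lemma~\ref{lem:iffbound} is therefore satisfied, with $C_{\phi_\beta,\varphi_\varepsilon} = (1-\delta)^{-1/\beta} = C_{\beta,\varepsilon}$, and the corollary follows. The only real difficulty here is algebraic: making the exponents of $r$ collapse so that taking $\phi_\beta^{-1}$ and normalizing by $r$ leaves a positive limiting constant. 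The condition $\beta(1+\varepsilon) < 1$ is precisely what keeps $(1-\delta)^{1/\beta}$ strictly positive and finite, playing the dual role of ensuring both the antiderivative is well-behaved and the final constant $C_{\beta,\varepsilon}$ exists.
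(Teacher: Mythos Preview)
Your proposal is correct and follows exactly the paper's approach: both apply Lemma~\ref{lem:iffbound} to $\varphi_\varepsilon(x)=x^{1+\varepsilon}$ and $\phi_\beta(x)=1-x^{-\beta}$, compute the integral in closed form, and obtain the same expression for $c_{\phi_\beta,\varphi_\varepsilon}(r)$ (your formula $(1-\delta)^{1/\beta}\bigl(1-\delta\,r^{(\delta-1)/(1+\varepsilon)}\bigr)^{-1/\beta}$ is a rewriting of the paper's, since $\beta-\tfrac{1}{1+\varepsilon}=\tfrac{\delta-1}{1+\varepsilon}$). Your monotonicity argument for identifying the infimum is slightly more explicit than the paper's, but the content is identical.
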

\begin{proof}
We apply \Lemma{iffbound} to $\varphi_\varepsilon(x) = x^{1+\varepsilon}$ and $\phi_\beta(x) = 1 - x^{-\beta}$. A simple calculation gives
\BEQ
c_{\phi,\varphi}(r) = \frac{1}{r}\left[ r^{-\beta} + \int_0^{1-r^{-\beta}} r^{-\frac{1}{1+\varepsilon}} (1-u)^{-\frac{1}{\beta(1+\varepsilon)}} du \right]^{-\frac{1}{\beta}} = \left[ \frac{1}{1 - \beta(1+\varepsilon)} - \frac{\beta(1+\varepsilon)r^{\beta - \frac{1}{1+\varepsilon}}}{1 - \beta(1+\varepsilon)} \right]^{-\frac{1}{\beta}}\,,
\EEQ
and thus $\inf_{r\geq 1} c_{\phi,\varphi}(r) = (1 - \beta(1+\varepsilon))^{\frac{1}{\beta}}$.
\end{proof}

As $1+\log(1/q)\log^2\log(1/q) = O(q^{-\varepsilon})$ for any $\varepsilon\in(0,1)$, \Corollary{powers} immediately implies that
\BEQ
\LBtime{\SPRS} = O(\mu_{\phi_{1-\varepsilon}}(T))\,.
\EEQ

Note that these upper bound are much stronger than $O(\mu_{\ln}(T))$, as $\phi_{1-\varepsilon}$ are bounded functions, and thus $\mu_{\phi_{1-\varepsilon}}(T) < +\infty$ as soon as $\Prob{T = +\infty} < 1$, while $\mu_{\ln}(T) = +\infty$ as soon as $\Prob{T = +\infty} > 0$.

\subsection{Sufficient condition for reverse Jensen's inequalities}\label{sec:sufficientJ}
\Corollary{powers} can be extended to any concave function $\phi$ whose log-derivative $\ln\phi'$ decreases \emph{``more slowly''} than $-2/t$ (\ie the derivative of $\ln\phi_1'$) at infinity. This includes a large class of concave functions, and for example $\ln(x)$ and $\ln(1+\ln(x))$.

\begin{theorem}
Let $\phi:[1,+\infty)\to\R_+$ be a concave increasing and twice differentiable function such that $\liminf_{t\to +\infty} \frac{t\phi''(t)}{\phi'(t)} > -2$. Then, there exists a constant $C_\phi > 0$ such that, for any distribution $p$,
\BEQ
\LBtime{\SPRS} \quad\leq\quad C_\phi\,\,\mu_\phi(T)\,.
\EEQ
\end{theorem}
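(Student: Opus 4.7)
The plan is to chain \Corollary{SAQ} with \Lemma{iffbound} applied to $\varphi_\varepsilon(q) = q^{1+\varepsilon}$. Since $1 + \log(1/q)\log^2\log(1/q) \leq C_\varepsilon\,q^{-\varepsilon}$ for any $\varepsilon > 0$ (the left-hand side is polylogarithmic in $1/q$ while $q^{-\varepsilon}$ is polynomial), \Corollary{SAQ} first yields
\BEQ
\LBtime{\SPRS} \;\leq\; 23\,C_\varepsilon \inf_{q \in (0,1]} \frac{Q_p(q)}{q^{1+\varepsilon}}\,,
\EEQ
and then \Lemma{iffbound} upgrades this to $\LBtime{\SPRS} = O(\mu_\phi(T))$ \emph{as soon as} we can pick some $\varepsilon = \varepsilon(\phi) > 0$ for which $\liminf_{r \to +\infty} c_{\phi,\varphi_\varepsilon}(r) > 0$. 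The whole proof thus reduces to verifying this liminf condition.

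The hypothesis furnishes $\delta \in (0,1)$ and $t_0 \geq 1$ such that $t\phi''(t)/\phi'(t) \geq -(2-\delta)$ on $[t_0, +\infty)$. Setting $\beta := 1-\delta \in (0,1)$, this is equivalent to the monotonicity of $s \mapsto s^{\beta+1}\phi'(s)$ on $[t_0,+\infty)$; in particular, $\phi'(s) \leq \phi'(r)(r/s)^{\beta+1}$ for $t_0 \leq s \leq r$ and $\phi'(r) \geq c_0\, r^{-\beta-1}$ for $r \geq t_0$, with $c_0 = t_0^{\beta+1}\phi'(t_0) > 0$. I then pick $\varepsilon \in (0,\delta/\beta)$, so that $a := 1/(1+\varepsilon) > \beta$.

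A change of variables $u = \phi(s)$ in the definition of $c_{\phi,\varphi_\varepsilon}$ from \Lemma{iffbound} yields $c_{\phi,\varphi_\varepsilon}(r) = \phi^{-1}(g(r))/r$ with $g(r) = \phi(r) - r^{-a}\int_1^r s^a \phi'(s)\,ds$, so it suffices to show $g(r) \geq \phi(\lambda r)$ for some fixed $\lambda > 0$ and all large $r$. Rewriting this as
\BEQ
\int_{\lambda r}^r \bigl(1-(s/r)^a\bigr)\phi'(s)\,ds \;\geq\; r^{-a}\int_1^{\lambda r} s^a\phi'(s)\,ds\,,
\EEQ
the plan is to lower bound the LHS by $\phi'(r)\,r\bigl(a/(a+1) - \lambda + \lambda^{a+1}/(a+1)\bigr)$ using only $\phi'(s) \geq \phi'(r)$ on $[\lambda r, r]$ (concavity), and to upper bound the RHS by $\lambda^{a-\beta} r\phi'(r)/(a-\beta) + O(r^{-a})$ by splitting the integral at $t_0$ and using $s^a\phi'(s) \leq r^{\beta+1}\phi'(r)\,s^{a-\beta-1}$ on $[t_0, \lambda r]$. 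Since $\phi'(r) \geq c_0 r^{-\beta-1}$ and $\beta < a$, the $O(r^{-a})$ error divided by $r\phi'(r)$ vanishes as $r\to +\infty$, and the comparison collapses to $a/(a+1) - \lambda + \lambda^{a+1}/(a+1) \geq \lambda^{a-\beta}/(a-\beta)$, which holds for $\lambda$ sufficiently small since the LHS tends to $a/(a+1) > 0$ and the RHS to $0$ as $\lambda \to 0^+$.

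The main obstacle is this two-sided integral comparison: both sides are integrals of $\phi'$ whose only control comes from the hypothesis, and the two directions—concavity near $r$, and the monotonicity of $s^{\beta+1}\phi'(s)$ away from $r$—must be combined so that the unavoidable contribution from $s \leq t_0$ remains genuinely lower-order relative to $r\phi'(r)$. The strict inequality $\liminf > -2$ is used precisely to pick $\varepsilon < \delta/\beta$, thereby keeping $1/(a-\beta)$ finite; the necessary condition stated in the next subsection will confirm that this margin is unavoidable.
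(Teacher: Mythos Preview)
Your proof is correct, and it takes a genuinely different route from the paper's.

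Both arguments share the same reduction: pass from \Corollary{SAQ} to $\inf_q Q_p(q)/q^{1+\varepsilon}$ via the polylogarithm-vs.-power bound, and then apply \Lemma{iffbound} with $\varphi_\varepsilon(q)=q^{1+\varepsilon}$, so that everything hinges on $\liminf_{r\to\infty} c_{\phi,\varphi_\varepsilon}(r)>0$. The divergence is in how this liminf is established. The paper does it indirectly: it shows that $\phi\circ\phi_{1-\varepsilon}^{-1}$ is convex on a tail, applies a conditional Jensen inequality to compare $\mu_\phi(T_{p_r})$ with $\mu_{\phi_{1-\varepsilon}}(T_{p_r})$, invokes the explicit computation of \Corollary{powers} to get $\mu_{\phi_{1-\varepsilon}}(T_{p_r})\geq Cr$, and finally runs an ODE comparison on $g(t)=\phi'(t)/\phi(t)$ to force $\phi(t)/(t^{2-\varepsilon}\phi'(t))\to 0$. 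You instead change variables once to write $c_{\phi,\varphi_\varepsilon}(r)=\phi^{-1}(g(r))/r$ with $g(r)=\phi(r)-r^{-a}\int_1^r s^a\phi'(s)\,ds$, and verify $g(r)\geq\phi(\lambda r)$ by a direct integral comparison: concavity gives the lower bound on $[\lambda r,r]$, while the monotonicity of $s^{\beta+1}\phi'(s)$ (which is exactly the hypothesis rephrased) controls the integral on $[t_0,\lambda r]$, and the residual $[1,t_0]$-piece is killed by $\phi'(r)\geq c_0 r^{-\beta-1}$.

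Your approach is more elementary: it avoids the auxiliary family $\phi_\beta$, the conditional Jensen step, and the ODE argument, and it makes the role of the gap $a>\beta$ completely transparent (it is precisely what keeps $\lambda^{a-\beta}/(a-\beta)$ finite and small). The paper's route, on the other hand, has the conceptual virtue of explicitly reducing the general $\phi$ to the benchmark family $\phi_\beta$, which connects the sufficient condition here to the necessary condition of \Theorem{lb2}.
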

\begin{proof}
As $\liminf_{t\to +\infty} \frac{t\phi''(t)}{\phi'(t)} > -2$, there exists $\varepsilon_0 > 0$ and $B\geq 1$ such that, $\forall \varepsilon\in(0,\varepsilon_0)$ and $t\geq B$, $\frac{\phi''(t)}{\phi'(t)} \geq \frac{-2+\varepsilon}{t} = \frac{\phi_{1-\varepsilon}''(t)}{\phi_{1-\varepsilon}'(t)}$. As a consequence, $\phi\circ\phi_{1-\varepsilon}^{-1}$ is convex on $[\phi_{1-\varepsilon}(B),+\infty)$, and we will show that Jensen's inequality on a well chosen random variable will imply that $\liminf_{r\to +\infty} c_{\phi,\varphi_\varepsilon}(r) > 0$ and thus that \Lemma{iffbound} can be used.
First, as discussed in the proof of \Lemma{iffbound}, $c_{\phi,\varphi_\varepsilon}(r) = \mu_\phi(T_{p_r})/r$ where $T_{p_r}$ is a random variable such that, $\forall \alpha\geq 1$, $\Prob{T_{p_r}\leq \alpha} = \one\{\alpha\geq 1\}\min\{\varphi^{-1}(\alpha/r), 1\}$.
Moreover, we have, $\forall r\geq B$,
\BEQ
\BA{lll}
\Exp{\phi(T_{p_r})} &\geq& \Prob{T_{p_r}>B}\Exp{\phi(T_{p_r})~|~T_{p_r}>B}\\
&=& \Prob{T_{p_r}>B}\Exp{\phi\circ\phi_{1-\varepsilon}^{-1}(\phi_{1-\varepsilon}(T_{p_r}))~|~T_{p_r}>B}\\
&\geq& \Prob{T_{p_r}>B}\phi\circ\phi_{1-\varepsilon}^{-1}(\Exp{\phi_{1-\varepsilon}(T_{p_r})~|~T_{p_r}>B})\\
&\geq& \left(1 - \left(\frac{B}{r}\right)^{\frac{1}{1+\varepsilon}}\right)\,\phi(\mu_{\phi_{1-\varepsilon}}(T_{p_r}))\,,
\EA
\EEQ
where the second inequality is due to Jensen's inequality on $T_{p_r}$ conditionned on being superior to $B$. As discussed in the proof of \Corollary{powers}, a simple calculation gives $\mu_{\phi_{1-\varepsilon}}(T_{p_r}) \geq Cr$ where $C=\varepsilon^{\frac{2}{1-\varepsilon}}$ and thus
\BEQ
c_{\phi,\varphi_\varepsilon}(r) \quad\geq\quad \frac{1}{r}\phi^{-1}\left(\left(1 - \left(\frac{B}{r}\right)^{\frac{1}{1+\varepsilon}}\right)\,\phi\left(Cr\right)\right) \quad\geq\quad C - \frac{B^{\frac{1}{1+\varepsilon}}}{r^{1+\frac{1}{1+\varepsilon}}}\frac{\phi(Cr)}{\phi'(Cr)}\,,
\EEQ
by noting that $\phi^{-1}$ is convex and thus above its tangent at $\phi(Cr)$. To conclude, it is sufficient to show that there exists $\varepsilon\in(0,1)$ such that $\limsup_{t\to +\infty} \frac{\phi(t)}{t^{2-\varepsilon}\phi'(t)} = 0$, as this would imply $\liminf_{t\to +\infty} c_{\phi,\varphi_\varepsilon}(r) \geq C$.
For any $t\geq B$, let $g(t) = \frac{\phi'(t)}{\phi(t)}$ and $h(t)$ be the solution to the differential equation $h'(t) = -h(t)\left(\frac{2-\varepsilon_0}{t} + h(t)\right)$ with initial condition $h(B)=g(B)$. First, note that $\forall t \geq B$, $g'(t) = \frac{\phi''(t)\phi(t) - \phi'(t)^2}{\phi(t)^2} \geq -g(t)\left(\frac{2-\varepsilon_0}{t} + g(t)\right)$ and thus $g'(t) - h'(t) \geq F(h(t)) - F(g(t))$ where $F(u) = u\left(\frac{2-\varepsilon_0}{t} + u\right)$ is increasing for $u\geq 0$. As $g(t),h(t)\geq 0$, we have that $g'(t) \geq h'(t)$ if $g(t) \leq h(t)$, and thus $\forall t\geq B$, $g(t) \geq h(t)$. Moreover, $\frac{h'(t)}{h(t)} \leq \frac{-2+\varepsilon_0}{t}$ and integrating this inequality gives $h(t) \leq h(B)(t/B)^{-2+\varepsilon_0}$. Finally, using this inequality in the equation for $h'(t)$ gives
$\frac{h'(t)}{h(t)} \geq -\left(\frac{2-\varepsilon_0}{t} + h(B)(t/B)^{-2+\varepsilon_0}\right)$ that integrates to $h(t) \geq h(B)(t/B)^{-2+\varepsilon_0}e^{\frac{-h(B)B^{2-\varepsilon_0}}{1-\varepsilon_0}}$ as long as $\varepsilon_0 < 1$. Summarizing, we obtained that, $\forall t \geq B$, $g(t) \geq At^{-2+\varepsilon_0}$ where $A = h(B)B^{2-\varepsilon_0}e^{\frac{-h(B)B^{2-\varepsilon_0}}{1-\varepsilon_0}} > 0$, which immediately gives, when $\varepsilon\in(0,\varepsilon_0)$,
\BEQ
\limsup_{t\to +\infty} \frac{\phi(t)}{t^{2-\varepsilon}\phi'(t)} \quad\leq\quad \limsup_{t\to +\infty} \frac{t^{\varepsilon - \varepsilon_0}}{A} \quad=\quad 0\,,
\EEQ
concluding the proof.
\end{proof}

In particular, we have $\LBtime{\SPRS} = O\left(e^{\Exp{\ln(T)}}\right)$ and our strategy $\SPRS$ is at least as good as that of \cite{zamir2022wrong}. However, we can also obtain much stronger upper bounds, for example $\LBtime{\SPRS} = O\left(e^{e^{\Exp{\ln(1+\ln(T))}}}\right)$ or $\LBtime{\SPRS} = O\left( \Exp{T^a}^{\frac{1}{a}} \right)$ for any $a > -1$.

\subsection{Necessary condition for reverse Jensen's inequalities}\label{sec:necessaryJ}
Conversely, we now show that reverse Jensen's inequalities cannot be obtained for functions whose log-derivative $\ln\phi'$ decreases \emph{``faster''} than $-2/t$.
First, we show that $\mu_{\phi_1}(T)$ where $\phi_1(x) = 1-\frac{1}{x}$ is a lower bound for the optimal expected running time $\ell^\star_p$, and is thus the smallest such reverse Jensen's inequality that can be hoped for.

\begin{lemma}\label{lem:lb}
Let $\phi_\beta(x) = 1 - x^{-\beta}$. Then, $\mu_{\phi_1}(T) = \frac{1}{\Exp{1/T}}$ and
\BEQ
\mu_{\phi_1}(T) \quad\leq\quad \inf_{S\in[1,+\infty)^\N} \LBtime{S} \quad\leq\quad e^2\,\mu_{\phi_1}(T)\,(1+\ln\mu_{\phi_1}(T))\,.
\EEQ
\end{lemma}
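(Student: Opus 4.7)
The lemma splits into three statements: the equality $\mu_{\phi_1}(T) = 1/\Exp{1/T}$, the lower bound $\mu_{\phi_1}(T) \leq \inf_S \LBtime{S}$, and the upper bound $\inf_S \LBtime{S} \leq e^2\,\mu_{\phi_1}(T)(1+\ln\mu_{\phi_1}(T))$. I will address them in turn. The equality is immediate from the definition: $\phi_1(x)=1-1/x$ gives $\Exp{\phi_1(T)} = 1 - \Exp{1/T}$ by linearity, and $\phi_1^{-1}(y) = 1/(1-y)$, so $\mu_{\phi_1}(T) = 1/\Exp{1/T}$. Write $\mu$ for this quantity in what follows.

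For the lower bound, the plan is to use \Eq{opt}, which recasts $\inf_S \LBtime{S} = \inf_{\alpha \geq 1} \Exp{\min\{T,\alpha\}}/\Prob{T\leq \alpha}$, and establish the inequality $\Exp{\min\{T,\alpha\}}\,\Exp{1/T} \geq \Prob{T\leq \alpha}$ for every $\alpha\geq 1$. Fix such an $\alpha$ with $q := \Prob{T\leq \alpha} > 0$ (the $q=0$ case is trivial since the ratio is $+\infty$), and let $m = \Exp{T\mid T\leq \alpha}$ and $\tilde m = \Exp{1/T\mid T\leq \alpha}$. Decompose $\Exp{\min\{T,\alpha\}} = qm + (1-q)\alpha$ and lower bound $\Exp{1/T} \geq q\tilde m$; then
\BEQ
\Exp{\min\{T,\alpha\}}\,\Exp{1/T} \;\geq\; (qm + (1-q)\alpha)\,q\tilde m \;=\; q\bigl(qm\tilde m + (1-q)\alpha\tilde m\bigr).
\EEQ
Conditional Jensen applied to the convex map $x\mapsto 1/x$ gives $\tilde m \geq 1/m$, hence $m\tilde m \geq 1$; since $T\leq\alpha$ on the conditioning event we also have $m\leq\alpha$, hence $\alpha\tilde m \geq \alpha/m \geq 1$. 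Combining, the RHS is $\geq q(q+(1-q))=q$, which after dividing by $\Prob{T\leq \alpha}$ and taking the infimum over $\alpha$ yields the lower bound.

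For the upper bound, combine the inequality $\inf_S \LBtime{S} \leq \inf_{\alpha\geq 1} \alpha/\Prob{T\leq \alpha}$ from \Lemma{opt} with a proof-by-contradiction argument on the latter quantity. Fubini yields the integral representation $\Exp{1/T} = \int_1^\infty \Prob{T<t}/t^2\,dt$. Set $A = e^2\mu(1+\ln\mu)$ and suppose, toward a contradiction, that $\alpha/\Prob{T\leq\alpha} > A$ for every $\alpha\geq 1$. Then $\Prob{T<t} \leq \min\{1,\,t/A\}$ for all $t\geq 1$, and splitting the integral at $t=A$ gives
\BEQ
\frac{1}{\mu} \;\leq\; \int_1^A \frac{t/A}{t^2}\,dt + \int_A^\infty \frac{1}{t^2}\,dt \;=\; \frac{1+\ln A}{A},
\EEQ
i.e.\ $A \leq \mu(1+\ln A)$. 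Substituting $A = e^2\mu(1+\ln\mu)$ reduces this to $e^2(1+\ln\mu) \leq 3 + \ln\mu + \ln(1+\ln\mu)$, which fails for every $\mu\geq 1$: $\ln(1+\ln\mu)\leq\ln\mu$ (because $1+\ln\mu\leq\mu$ for $\mu\geq 1$), so the RHS is at most $3+2\ln\mu$, while the LHS is $e^2+e^2\ln\mu > 3+2\ln\mu$ since $e^2>3$.

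The main obstacle is tuning the constant in the upper bound: the factor $e^2$ is chosen precisely so the elementary inequality $e^2(1+\ln\mu) > 3+\ln\mu+\ln(1+\ln\mu)$ holds uniformly for $\mu\geq 1$, with the tightest case at $\mu=1$ where it reduces to $e^2>3$. The conditional-Jensen argument for the lower bound is clean once one separates the two summands $qm\tilde m$ and $(1-q)\alpha\tilde m$ and bounds them independently using the complementary inequalities $\tilde m\geq 1/m$ and $m\leq\alpha$.
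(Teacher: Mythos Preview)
Your proof is correct, and both halves take genuinely different routes from the paper.

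For the lower bound, the paper writes $\Prob{T\leq\alpha}\leq\Exp{\min\{1,\alpha/T\}}=\Exp{T^{-1}\min\{T,\alpha\}}$ and then invokes the negative correlation of $T^{-1}$ (decreasing in $T$) and $\min\{T,\alpha\}$ (increasing in $T$) to split the expectation into $\Exp{T^{-1}}\Exp{\min\{T,\alpha\}}$. Your conditional decomposition replaces this FKG-type step by two applications of the bound $\tilde m\geq 1/m$ from conditional Jensen, together with $m\leq\alpha$. Your argument is slightly longer but fully elementary; the paper's is a one-liner once the correlation inequality is granted.

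For the upper bound, the paper reuses \Corollary{powers} with $\varepsilon=0$ and $\beta=1/(1+a)$ to obtain $\inf_S\LBtime{S}\leq (1+1/a)^{1+a}\mu_{\phi_1}(T)^{1+a}$, then optimises by setting $a=1/\ln\mu$ and uses $(1+1/a)^a\leq e$ and $\mu^{1/\ln\mu}=e$ to extract the factor $e^2$. Your approach is instead a direct tail argument: the Fubini identity $\Exp{1/T}=\int_1^\infty\Prob{T<t}t^{-2}\,dt$ combined with the hypothetical bound $\Prob{T<t}\leq t/A$ yields $A\leq\mu(1+\ln A)$, and the choice $A=e^2\mu(1+\ln\mu)$ is tuned so that this fails for all $\mu\geq 1$. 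This is self-contained and avoids the machinery of \Lemma{iffbound} and \Corollary{powers} entirely, at the cost of an ad hoc verification of the final numerical inequality. The paper's route, by contrast, explains where $e^2$ comes from as a product $e\cdot e$ of two natural limits.
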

\begin{proof}
The first inequality is a consequence of, $\forall \alpha\geq 1$,
\BEQ
\Exp{\one\{T\leq\alpha\}} \quad\leq\quad \Exp{\min\left\{1,\frac{\alpha}{T}\right\}} \quad=\quad \Exp{T^{-1}\,\min\{T,\alpha\}} \quad\leq\quad \Exp{T^{-1}}\Exp{\min\{T,\alpha\}}\,,
\EEQ
where the first inequality come from $\one\{t\leq\alpha\} \leq \min\{1,\alpha/t\}$, while the second inequality is due to the negative correlation of $T^{-1}$ and $\min\{T,\alpha\}$. Thus, we have $\mu_{\phi_1}(T) \leq \min_{\alpha \geq 1} \frac{\Exp{\min\{T,\alpha\}}}{\Prob{T\leq\alpha}} = \inf_{S\in[1,+\infty)^\N} \LBtime{S}$.
The second inequality is a consequence of \Corollary{powers} with $\beta = (1+a)^{-1}$ and $\varepsilon=0$, and noting that, as $1/T\leq 1$, we have $\mu_{\phi_{1/(1+a)}}(T) \leq \mu_{\phi_1}(T)^{1+a}$. This gives $\inf_{S\in[1,+\infty)^\N} \LBtime{S} \leq \left(1+\frac{1}{a}\right)^{1+a}\,\mu_{\phi_1}(T)^{1+a}$ and setting $a = 1/\ln\mu_{\phi_1}(T)$ gives the desired result.
\end{proof}

Intuitively, \Lemma{lb} shows that $\mu_{\phi_1}(T)$ lower bounds the best expected running time of restart strategies, and thus no restart strategy can achieve an expected running time significantly better than $\mu_{\phi_1}(T)$.
As a consequence, we now show that achieving a reverse Jensen's inequality $\LBtime{S} = O(\mu_\phi(T))$ for a function $\phi$ such that $\ln\phi'$ decreases \emph{``faster''} than $-2/t$ would contradict the lower bound in $\ell^\star_p\log \ell^\star_p$ of \Theorem{luby}.
\begin{theorem}\label{th:lb2}
Let $\phi:[1,+\infty)\to\R_+$ be an increasing and twice differentiable function such that $\limsup_{t\to +\infty} \frac{t\phi''(t)}{\phi'(t)} < -2$. Then, there exists no sequence $S\in[1,+\infty)^\N$ and constant $C>0$ such that, for any distribution $p$, $\LBtime{S} \leq C\,\mu_\phi(T)$.
\end{theorem}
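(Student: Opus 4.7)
The plan is to derive a contradiction. Suppose some (possibly random) sequence $S$ and constant $C>0$ satisfy $\LBtime{S} \leq C\,\mu_\phi(T)$ for every distribution $p$. I will exhibit a family $(p_n)_{n\geq 1}$ for which $\mu_\phi(T_{p_n})$ grows sub-linearly in $n$ while $\ell^\star_{p_n} \geq n$, so that the elementary bound $\LBtime{S} \geq \ell^\star_{p_n}$ from \Lemma{opt} (which still holds for random $S$ by conditioning on a realization) forces the assumed inequality to fail for $n$ large.

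First, from $\limsup_{t\to+\infty}\frac{t\phi''(t)}{\phi'(t)} < -2$, I extract $\varepsilon_0>0$ and $B\geq 1$ such that $\phi''(t)/\phi'(t) \leq (-2-\varepsilon_0)/t$ on $[B,+\infty)$. Integrating once yields $\phi'(t) \leq C_1\, t^{-2-\varepsilon_0}$, and a second integration shows that $\phi$ is bounded, with the key decay estimate $\phi(\infty) - \phi(t) \leq C_2\, t^{-1-\varepsilon_0}$ on $[B,+\infty)$. Solving in $t$ yields the companion inversion bound $\phi^{-1}(\phi(\infty) - \delta) \leq (C_2/\delta)^{1/(1+\varepsilon_0)}$ for small $\delta>0$.

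I then take $p_n$ to be the uniform law on $\{2^1, 2^2, \ldots, 2^n\}$. A direct quantile computation gives $Q_{p_n}(q)/q \geq 2n$ for every $q\in(0,1]$, so \Lemma{opt} yields $\ell^\star_{p_n} \geq n$. In the other direction,
\BEQ
\phi(\infty) - \E[\phi(T_{p_n})] \quad=\quad \frac{1}{n}\sum_{i=1}^n \big(\phi(\infty) - \phi(2^i)\big)\,,
\EEQ
which I bound from above by splitting the sum at $i \approx \log_2 B$: the prefix contributes $O(1)$ (crudely bounded by $\log_2 B \cdot (\phi(\infty) - \phi(1))$) and the tail is dominated by the convergent geometric series $\sum_i C_2\, 2^{-i(1+\varepsilon_0)} = O(1)$. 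This gives $\phi(\infty) - \E[\phi(T_{p_n})] = O(1/n)$, and the inversion bound from Step~1 then yields $\mu_\phi(T_{p_n}) \leq C_3\, n^{1/(1+\varepsilon_0)}$.

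Combining the two estimates, the assumption would force $n \leq \LBtime{S} \leq C\,\mu_\phi(T_{p_n}) \leq CC_3\, n^{1/(1+\varepsilon_0)}$, impossible for $n$ large since $1/(1+\varepsilon_0)<1$. The main technical subtlety I expect is the quantitative inversion of $\phi$ near $\phi(\infty)$: merely knowing that $\phi$ is bounded would only ensure $\mu_\phi(T_{p_n})<+\infty$, and it is precisely the polynomial decay rate in Step~1 that converts the $\Theta(1/n)$ gap at the level of $\E[\phi(T)]$ into the sub-linear bound on $\mu_\phi(T_{p_n})$.
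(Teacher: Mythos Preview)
Your overall strategy is sound and genuinely different from the paper's. The paper never exhibits a concrete family of distributions: it shows that $\mu_\phi(T)\leq \mu_{\phi_1}(T+B)$ via Jensen applied to $\phi\circ\phi_1^{-1}$, then invokes \Lemma{lb} to get $\mu_{\phi_1}(T+B)\leq \ell^\star_{\tilde p_B}=O(\ell^\star_p)$, so that the hypothetical bound would give $\LBtime{S}=O(\ell^\star_p)$, contradicting the $\Omega(\ell^\star_p\log\ell^\star_p)$ lower bound of \Theorem{luby}. Your route is more elementary and self-contained (it uses neither \Lemma{lb} nor the Luby lower bound), at the price of a small explicit computation.

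There is, however, a direction error in the key step. Your inversion bound $\phi^{-1}(\phi(\infty)-\delta)\leq (C_2/\delta)^{1/(1+\varepsilon_0)}$ is a \emph{decreasing} function of $\delta$, so to control $\mu_\phi(T_{p_n})=\phi^{-1}(\phi(\infty)-\delta_n)$ from above you need a \emph{lower} bound on $\delta_n:=\phi(\infty)-\Exp{\phi(T_{p_n})}$, not the upper bound $\delta_n=O(1/n)$ you labor to obtain by splitting the sum. As written, $\delta_n\leq D/n$ plugged into the inversion bound yields nothing. The fix is immediate: simply keep the first term,
\[
\delta_n \;=\; \frac{1}{n}\sum_{i=1}^n\big(\phi(\infty)-\phi(2^i)\big)\;\geq\;\frac{\phi(\infty)-\phi(2)}{n},
\]
and then $\mu_\phi(T_{p_n})\leq \big(C_2 n/(\phi(\infty)-\phi(2))\big)^{1/(1+\varepsilon_0)}$ follows as you intended. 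Your upper bound $\delta_n\leq M/n$ is not wasted: it is exactly what guarantees that $\phi(\infty)-\delta_n>\phi(B)$ for large $n$, so that the inversion estimate (valid only on $[B,+\infty)$) is applicable. With this correction the argument goes through.
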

\begin{proof}
Let us assume that there exists $S$ such that $\LBtime{S} \leq O(\mu_\phi(T))$. As $\limsup_{t\to +\infty} \frac{t\phi''(t)}{\phi'(t)} < -2$, there exists $B > 0$ such that $\forall t\geq B$, $\frac{t\phi''(t)}{\phi'(t)} \leq -2$ and thus $\phi\circ\phi_1^{-1}$ is concave on $[\phi_1(B),+\infty)$. As $T+B \geq B$, applying Jensen's inequality to $\phi\circ\phi_1^{-1}$ thus gives
\BEQ
\mu_\phi(T) \quad\leq\quad \mu_\phi(T + B) \quad\leq\quad \mu_{\phi_1}(T + B)\,.
\EEQ
As a consequence, \Lemma{lb} implies that $\LBtime{S} = O(\mu_{\phi_1}(T+B)) = O(\ell^\star_{\tilde{p}_B})$ where $\tilde{p}_B$ is the distribution of the random variable $T+B$. Finally, we conclude by noting that $\ell^\star_{\tilde{p}_B} \leq \min_{q\in(0,1]} \frac{Q_p(q) + B}{q} \leq 2(1+B)\ell^\star_p$ as $Q_p(q) \geq 1$. However, this implies that $\LBtime{S} = O(\ell^\star_p)$ which contradicts the lower bound of \Theorem{luby} in $\Omega(\ell^\star_p\ln\ell^\star_p)$.
\end{proof}

For example, \Theorem{lb2} implies that it is impossible to obtain an expected running time in $O(\mu_\phi(T))$ for $\phi(x) = 1 - x^{-\beta}$ and $\beta\geq 1$, or for $\phi(x) = 1 - e^{-sx}$ and $s > 0$, and shows that the behavior of $\frac{t\phi''(t)}{\phi'(t)}$ at infinity controls the existence or absence of an algorithmic reverse Jensen's inequality for any concave function $\phi$.

\section{Conclusion}
In this work, we showed that the universal strategy of \cite{luby1993optimal} can be further improved by using simulated parallel computations of restart strategies achieving fixed quantiles. Our novel restart strategies $\SPRS$ and $\SSPRS$ show superior performance compared to $\Suniv$ of \cite{luby1993optimal} in low variance settings in which the running time is close to deterministic, or when the optimal quantile $Q_p(q^\star)$ is large compared to $1/q^\star$. Moreover, we showed that these algorithms exhibit upper bounds of the form $O(\mu_\phi(T))$ where $\mu_\phi(T)=\phi^{-1}(\Exp{\phi(T)})$ for a large class of concave functions including powers $\phi(x) = x^a$ for $a\in(0,1)$, iterated logarithms $\phi(x) = \ln x$ and $\phi(x) = \ln(1+\ln x)$, and even bounded functions of the form $\phi(x) = 1 - x^{-\beta}$ for $\beta\in(0,1)$.
Further work includes better pinpointing the conditions on $\phi$ for which the reverse Jensen's inequality $\LBtime{\SPRS} = O(\mu_\phi(T))$ holds, as well as investigating the optimality of SPRS, for example by providing a matching lower bound on the expected running time.

\section*{Acknowledgements}
The author would like to thank Marc Lelarge for bringing this problem to his attention, and Ana Busic, Laurent Massoulié and the whole Argo team at Inria Paris for providing valuable feedback on the work and engaging in passionate discussions.

\newpage
\bibliographystyle{alpha}
\bibliography{bibliography}

\appendix

\end{document}